\newtheorem{theorem}{Theorem}[section]
\newtheorem{lemma}{Lemma}[section]
\newtheorem{definition}{Definition}[section]
\newcommand{\ANNASCOMMENT}[1]{}
\newcounter{list}
\title{Information Theoretically Secure Hypothesis Test for Temporally Unstructured Quantum Computation}
\author[1]{Daniel Mills}
\author[1,2]{Anna Pappa}
\author[1,3]{Theodoros Kapourniotis}
\author[1,4]{Elham Kashefi}
\affil[1]{School of Informatics, University of Edinburgh, UK}
\affil[2]{Department of Physics, University College London, UK}
\affil[3]{Department of Physics, University of Warwick, UK}
\affil[4]{LIP6, CNRS, Pierre et Marie Curie University, Paris, France}
\date{}
\begin{document}

	\maketitle
	
	\begin{abstract}			
        \noindent	We propose a new composable and information-theoretically secure protocol to verify that a server has the power to sample from a sub-universal quantum machine implementing only commuting gates. By allowing the client to manipulate single qubits, we exploit properties of Measurement based Blind Quantum Computing to prove security against a malicious Server and therefore certify quantum supremacy without the need for a universal quantum computer. 
	\end{abstract}

\section{Introduction}
\label{sec:introduction}

Quantum computers are believed to be efficient for simulating quantum systems \cite{Simulating Physics with Computers,Quantum Simulation} and have been shown to have many other applications \cite{Quantum Computing and Quantum Information}. Protocols demonstrating the power of quantum computers include Shor's algorithm for prime factorisation \cite{Poly-Time Algorithms for Prime Factorisation and Discrete Logarithms on a Quantum Computer}, Grover's algorithm for unstructured search \cite{A fast quantum mechanical algorithm for database search}, and the BB84 protocol for public key exchange \cite{Quantum cryptography: Public key distribution and coin tossing}. 

That said, it may be some time before a large scale universal quantum computer capable of demonstrating the computational power of these protocols is built. In the meantime several intermediate, non-universal models of quantum computation, like the one clean qubit model \cite{Power of One Bit of Quantum Information,Hardness of Classically Simulating the One-Clean-Qubit Model} and the boson sampling model \cite{An Introduction to Boson-Sampling}, have been developed and may prove easier to implement. The \emph{Instantaneous Quantum Poly-time} (IQP) machine \cite{Temporally_Unstructured_Quantum_Computation} is another such non-universal model with significant practical advantages \cite{Fault-tolerant computing with biased-noise superconducting qubits: a case study, Architectures for quantum simulation showing quantum supremacy}. In spite of the fact that IQP uses only commuting gates (in contrast to the non-commuting gate set needed for universal computations), it is believed to remain hard to classically simulate \cite{Classical Simulation of Commuting Quantum Computations Implies Collapse of the Polynomial Hierarchy,Average-case complexity versus approximate simulation of commuting quantum computations} even in a noisy environment \cite{Achieving quantum supremacy with sparse and noisy commuting quantum computations}. Hence, providing evidence that a machine can perform hard IQP computations would be a proof of its quantum supremacy.

In \cite{Temporally_Unstructured_Quantum_Computation}, the authors present a \emph{hypothesis test} that can be passed only by devices capable of efficiently simulating IQP machines, providing the aforementioned evidence of the capability to perform hard IQP computations. The client in that work is purely classical, however computational assumptions (conjecturing the hardness of finding hidden sub-matroids) were required for the security of the test against a malicious server. In the present work, by providing a suitable implementation of the IQP machine in the setting of Measurement Based Quantum Computing (MBQC) \cite{A one-way quantum computer,Measurement-based quantum computation on cluster states}, we are able to use tools from quantum cryptography (e.g. blind quantum computing \cite{Universal Blind Quantum Computation,Unconditionally Verifiable Blind Quantum Computation}) to develop an information-theoretically secure hypothesis test. To do so, we need to empower the client with minimal quantum capabilities such as those required in standard Quantum Key Distribution schemes. 

The structure of this work is as follows. In Section \ref{sec:Preliminaries}, we formally introduce the IQP machine and develop an implementation of it in MBQC that is more suitable for our blind delegated setting than previous ones \cite{Temporally_Unstructured_Quantum_Computation,Measurement-based classical computation}. In Section \ref{sec:Blind IQP} we derive a delegated protocol for IQP computations that keeps the details of the computation hidden from the device performing it, and prove information-theoretic security in a composable framework. Finally in Section \ref{sec:hypothesis test} we develop our hypothesis test for quantum supremacy, which a limited quantum client can run on an untrusted Server.

	
\section{Preliminaries}
\label{sec:Preliminaries}
\subsection{X-programs}
\label{subsec:Xprograms}

The IQP machine introduced in \cite{Temporally_Unstructured_Quantum_Computation}, is defined by its capacity to implement $X$-programs. 
\begin{definition}
    An \emph{$X$-program} consists of a Hamiltonian comprised of a sum of products of $X$ operators on different qubits, and $\theta\in[0,2\pi]$ describing the action for which it is applied.  The $i$-th term of the sum has a corresponding vector $\mathbf{q}_{i}$, called a \emph{program element}, which defines on which of the $n_p$ input qubits, the product of $X$ operators, which constitute that term, act. $\mathbf{q}_i$ has 1 in the $j$-th position when $X$ is applied on the $j$-th qubit.
    
    As such, we can describe the $X$-program using $\theta$ and a poly-size list of $n_a$ vectors $ \mathbf{q}_i \in \left\{ 0 , 1 \right\} ^{n_{p}}$ or, if we consider the matrix $\mathbf{Q}$ which has as rows the program elements $\mathbf{q}_i,i=1,\dots,n_a$, simply by the pair $\left( \mathbf{Q} , \theta \right) \in \left\{ 0 , 1 \right\}^{n_{a} \times n_{p}} \times \left[ 0 , 2 \pi \right]$.
\end{definition}

Applying the $X$-program discussed above to the computational basis state $\ket{0^{n_p}}$ and measuring the result in the computational basis allows us to see an $X$-program as a quantum circuit with input $\ket{0^{n_p}}$, comprised of gates diagonal in the Pauli-X basis, and classical output. Using the random variable $X$ to represent the distribution of output samples, the probability distribution of outcomes $\widetilde{x} \in \{0,1\}^{n_{p}}$ is:
\begin{equation}
	\label{equ:IQP probability distribution}
	\mathbb{P} \left( X = \widetilde{x} \right) = \left| \bra{\widetilde{x}} \exp \left( \sum_{i=1}^{n_a} i \theta \bigotimes_{j: \mathbf{Q}_{ij} = 1} X_{j}\right) \ket{0^{n_{p}}}  \right|^2
\end{equation}
Note that the $i$ not used as an index is the imaginary unit.

\begin{definition}
	\label{def:IQP oracle}
	Given some X-program, an \emph{IQP machine} is any computational method capable of efficiently returning a sample $\widetilde{x} \in \{0,1\}^{n_{p}}$ from the probability distribution \eqref{equ:IQP probability distribution}.
\end{definition}


\subsection{IQP In MBQC}
\label{subsec:IQP in MBQC}

We present an implementation of a given $X$-program in MBQC that will be used later in our protocol design. First notice that using the equality below: 
\begin{equation*}
	\label{equ:X-prog unitary with Z-prog}
	\exp \left( \sum_{i=1}^{n_a} i \theta \bigotimes_{j: \mathbf{Q}_{ij} = 1} X_{j}\right) = H_{n_p} \left( \prod_{i=1}^{n_a}\exp \left( i \theta \bigotimes_{j: \mathbf{Q}_{ij} = 1} Z_{j}\right) \right) H_{n_p}
\end{equation*}
equation \eqref{equ:IQP probability distribution} can be rewritten as:
\begin{equation}
	\label{equ:Z-gate X-prog distribution}
	\mathbb{P} \left( X = \widetilde{x} \right) = \left| \left( \bra{\widetilde{x}} H_{n_p} \right) \left( \prod_{i=1}^{n_a} \exp \left( i \theta \bigotimes_{j: \mathbf{Q}_{ij} = 1} Z_{j}\right) \right) \ket{\mathbf{+}^{n_p}}  \right|^2
\end{equation}
For any given $i$, we now show the following lemma.
\begin{lemma}
	\label{lemma:X-prog circuit}
	The circuit of Figure \ref{fig:z-prog circuit} implements the unitary:
	\begin{equation}
	    \label{equ:single Z-prog unitary term}
	    \exp \left( i \theta \bigotimes_{j : \mathbf{Q}_{ij} = 1} Z_{j}\right)
    \end{equation}
\end{lemma}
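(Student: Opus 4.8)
The plan is to verify the circuit identity by checking its action on the computational basis, exploiting the fact that the target unitary in \eqref{equ:single Z-prog unitary term} is diagonal there. Writing $S = \{ j : \mathbf{Q}_{ij} = 1 \}$, for any computational basis state $\ket{x}$ on the $n_p$ qubits one has $\bigotimes_{j \in S} Z_j \ket{x} = (-1)^{\bigoplus_{j \in S} x_j} \ket{x}$, so the target maps $\ket{x}$ to $\exp\!\left( i \theta (-1)^{\bigoplus_{j \in S} x_j} \right)\ket{x}$; concretely it imprints the phase $e^{i\theta}$ when the parity of the relevant bits is even and $e^{-i\theta}$ when it is odd. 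Since the computational basis states span the space and the operator is diagonal, it suffices to show that the circuit of Figure \ref{fig:z-prog circuit} imprints exactly this parity-dependent phase on each $\ket{x}$ while returning any ancillary qubit to its initial product state, so that it disentangles and the map on the data register is well defined.

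Concretely, I would track a generic basis input $\ket{x}$ through the gadget. An ancilla prepared in $\ket{+}$ is coupled by controlled-$Z$ gates to precisely the data qubits indexed by $S$, so that after the entangling stage the ancilla carries a relative phase recording the parity $\bigoplus_{j \in S} x_j$. Measuring that ancilla in the $\theta$-rotated basis then transfers a phase of the form $\exp\!\left( \pm i \theta (-1)^{\bigoplus_{j \in S} x_j} \right)$ onto the data register, with the sign fixed by the measurement outcome, after which the ancilla is left in a fixed basis state and factors out. Collecting this across all $\ket{x}$ and invoking linearity would give the claimed action up to an outcome-dependent correction. (If instead the figure realises the same map by a unitary parity-into-ancilla ladder, a single-qubit $Z$-rotation and uncomputation, the identical diagonal-on-basis bookkeeping applies, with the only extra check being that the uncomputation restores the ancilla.)

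The main obstacle I anticipate is the measurement nondeterminism: a generic outcome produces a Pauli byproduct together with an unwanted sign flip $\theta \mapsto -\theta$, so the crux is to show that the prescribed classically controlled correction makes the overall channel \emph{deterministically} equal to \eqref{equ:single Z-prog unitary term}, not merely equal up to a known byproduct. I would handle this with the standard MBQC commutation relation between a $Z$-diagonal rotation and its byproduct operator, namely that pushing the byproduct past the diagonal rotation is equivalent to the sign change on $\theta$, and then verify that the correction exactly cancels this change for both outcomes. Once determinism is established and the ancilla is confirmed to disentangle, agreement on the entire computational basis upgrades the pointwise phase computation to the operator identity asserted by the lemma.
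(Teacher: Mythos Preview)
Your plan is essentially the paper's proof: check on computational basis states by linearity, observe that the controlled-$Z$ ladder writes the parity $\bigoplus_{j\in S} x_j$ into the ancilla as $\ket{+}$ versus $\ket{-}$, measure in the basis \eqref{equ:IQP measuremnt basis} to imprint the phase $e^{\pm i\theta}$, and correct for the random outcome with classically controlled $Z$'s.

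One small imprecision worth fixing when you write it out: with the specific basis \eqref{equ:IQP measuremnt basis}, the wrong outcome does \emph{not} produce a $\theta\mapsto-\theta$ flip together with a Pauli byproduct as in the generic MBQC gadget. Rather, both outcomes already yield the correct phase $e^{i\theta(-1)^{\text{parity}}}$, except that outcome $1$ in the odd-parity case carries an extra overall minus sign. The correction is then not a commutation of a byproduct past a subsequent rotation (there is none), but simply that the conditional $Z$'s on the qubits in $S$ contribute $(-1)^{\text{parity}}$ on each basis state, which cancels that stray sign exactly when it appears. This is how the paper closes the argument; your ``push the byproduct past the diagonal rotation'' picture does not match the actual mechanism here, though it would lead you to the right answer once you do the bookkeeping explicitly.
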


\begin{figure}
	\centering
	\begin{tikzpicture}[scale = 0.7]
		\draw[very thick] (0,6) node[anchor=east] {$\tilde{p}_1$} -- (7,6);
		\node at (0,5) {$\vdots$};
		\draw[very thick] (0,4) node[anchor=east] {$\tilde{p}_{\# i}$} -- (5,4);
		\draw[very thick] (0,3) node[anchor=east] {$\tilde{p}_{\# i+1}$} -- (9,3);
		\node at (0,2) {$\vdots$};
		\draw[very thick] (0,1) node[anchor=east] {$\tilde{p}_{n_{p}}$} -- (9,1);
		\draw[very thick] (0,0) node[anchor=east] {$\ket{+}$} -- (4,0);
		
		\filldraw[very thick] (1,0) circle (3pt) -- (1,6) circle (3pt);
		
		\node at (2,5) {$\ddots$};
		
		\filldraw[very thick] (3,0) circle (3pt) -- (3,4) circle (3pt);
		
		\draw[thick] (4,-0.4) rectangle (5,0.4);
		\draw (4.1,-0.1) .. controls (4.3,0.2) and (4.7,0.2) .. (4.9,-0.1);
		\draw[thick, ->] (4.5, -0.2) -- (4.8, 0.3);
		
		\draw[very thick] (5,-0.1) -- (7.6,-0.1) -- (7.6,5.5);
		\draw[very thick] (5,0.1) -- (5.4,0.1) -- (5.4,3.5);
		\draw[very thick] (5.6,3.5) -- (5.6,0.1) -- (7.4,0.1) -- (7.4,5.5);
		
		\draw[very thick] (5,3.5) rectangle (6,4.5) node[pos = 0.5] {$Z$};
		\draw[very thick] (6,4) -- (9,4);
		
		\node at (6.5,5) {\reflectbox{$\ddots$}};
		
		\draw[very thick] (7,5.5) rectangle (8,6.5) node[pos = 0.5] {$Z$};
		\draw[very thick] (8,6) -- (9,6);
		
	\end{tikzpicture}
	\caption{The circuit implementing Expression \eqref{equ:single Z-prog unitary term}. The input qubits $\{p_j\}_{j=1}^{n_p}$ are rearranged so that if $\# i$ is the Hamming weight of row $i$ of matrix $\mathbf{Q}$, then for $k=1,\dots,\#i$ each $\tilde{p}_k$ corresponds to one $p_j$ such that $\mathbf{Q}_{ij}=1$ and for $k=\#i+1,\dots,n_p$ they correspond to the ones such that $\mathbf{Q}_{ij}=0$.  The ancillary qubit measurement is in the basis $\left\{ \ket{0_{\theta}} , \ket{1_{\theta}} \right\}$ defined in expression \eqref{equ:IQP measuremnt basis}.}
	\label{fig:z-prog circuit}
\end{figure}
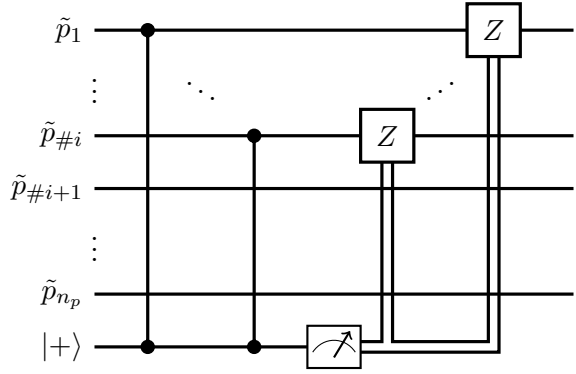

\begin{proof}

To prove this statement, we will prove that the effect of Figure \ref{fig:z-prog circuit} and expression \eqref{equ:single Z-prog unitary term} is the same on all inputs. Without loss of generality, we can consider only computational basis input states $\ket{p}=\ket{p_1}\dots\ket{p_{n_p}}$, $p_{j} \in \left\{ 0 , 1 \right\}$. Since the operation that we perform is linear, the result then follows for all inputs.

Notice that, representing the $n_{p}$-qubit identity operator by $\mathbb{I}_{n_{p}}$, we can rewrite Expression \eqref{equ:single Z-prog unitary term} as: 
	\begin{equation}
		\label{equ:single Z-prog unitary term rewritten}
		 \cos{\theta} \mathbb{I}_{n_{p}}+ i \sin{\theta} \bigotimes_{j : \mathbf{Q}_{ij} = 1} Z_{j} 
	\end{equation}
The above operator on $\ket{p}$ has two possible outcomes:
	\begin{enumerate}  
		\item For the $j \in \left\{ 1,\dots,n_p \right\}$ such that $\mathbf{Q}_{ij}=1$, if the number of $\ket{p_j}=\ket{1}$ is even, then there will be a phase change of $\cos{\theta} + i \sin{\theta}$, as the $\bigotimes_{j : \mathbf{Q}_{ij} = 1} Z_{j}$ operator will extract an even number of negatives. 
		\item For the  $j$'s ($j=1,\dots,n_p$) such that $\mathbf{Q}_{ij}=1$, if the number of $\ket{p_j}=\ket{1}$ is odd, then the phase change will be $\cos{\theta} - i \sin{\theta}$.
	\end{enumerate}
Hence, depending on the parity of $\ket{p}$ in the positions where $\mathbf{Q}_{ij}=1$, the effect is to produce one of the two states:
	\begin{equation}
		\label{equ:single Z-prog unitary term operating on basis}
		\left( \cos{\theta} \pm i \sin{\theta} \right) \ket{p} = e^{ \pm i \theta }\ket{p}
	\end{equation}

We now show the effect of the circuit in Figure \ref{fig:z-prog circuit} is the same as the operator in expression \eqref{equ:single Z-prog unitary term rewritten}. For ease of readability, in Figure  \ref{fig:z-prog circuit} we consider a permutation of the states $\ket{\tilde{p}_1},\dots,\ket{\tilde{p}_{\#i}},\dots,\ket{\tilde{p}_{n_p}}$ such that the first $\#i$ qubits are the ones for which the value in the corresponding position in the program element is 1.
	
The action of the controlled-Z gates is to check the parity of $\ket{1}$'s in the input as each appearance of a $\ket{1}$ will flip the bottom \emph{ancillary} qubit between the states $\ket{+}$ and $\ket{-}$. After the action of all controlled-Z operators, we have the state $\ket{p} \ket{+}$ if there is an even number of $\ket{\tilde{p}_k}=\ket{1}$ for $k=1,\dots,\#i$ and $\ket{p} \ket{-}$ if this number is odd. Making a measurement of the ancillary qubit in the basis:
\begin{equation}
	\label{equ:IQP measuremnt basis}
	\left\{ \ket{0_{\theta}} , \ket{1_{\theta}} \right\}= \left\{ \frac{1}{\sqrt{2}} \left(  e^{-i\theta}\ket{+} + e^{i\theta}\ket{-} \right) ,\frac{1}{\sqrt{2}} \left(e^{-i\theta} \ket{+} - e^{i\theta}\ket{-} \right)  \right\} 
\end{equation}  
leaves us with one of the two states $\pm e^{ - i \theta}\ket{p}$ in the odd parity case and with the state $e^{ i \theta} \ket{p}$ in the even parity case. The negative sign preceding the exponential term in the odd parity case comes from measuring the state $\ket{1_{\theta}}$ (a measurement outcome of $1$) and the positive sign comes from measuring $\ket{0_{\theta}}$.

In the case of a measurement outcome $1$, we then apply $Z$ operators to all unmeasured qubits to ensure that the resulting states are as in expression \eqref{equ:single Z-prog unitary term operating on basis} and with the same dependency of the sign on the parity of $\ket{p}$.
\end{proof}

We now consider generating the full distribution of equation \eqref{equ:Z-gate X-prog distribution} using measurement based quantum computing \cite{A one-way quantum computer,Measurement-based quantum computation on cluster states}. An MBQC computation consists of a graph describing the pattern of entanglement amongst the qubits in a state, a measurement pattern describing the order of measurements of qubits in that state, and a set of corrections on later measurements which can depend on the outcomes of previous ones. We now identify all of these components of an MBQC computation in the case of an IQP computation.


\begin{lemma}
	\label{lem:IQP graph design}
	A graph and measurement pattern can always be designed to simulate an $X$-program efficiently.
\end{lemma}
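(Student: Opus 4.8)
The plan is to build the MBQC computation directly from the gadget of Lemma \ref{lemma:X-prog circuit}, using one copy per program element, and then to recognise the whole construction as a graph state equipped with a single-qubit measurement pattern of polynomial size. First I would note that, by equation \eqref{equ:Z-gate X-prog distribution}, it suffices to realise the unitary $\prod_{i=1}^{n_a}\exp\left(i\theta\bigotimes_{j:\mathbf{Q}_{ij}=1}Z_j\right)$ on the input register prepared in $\ket{\mathbf{+}^{n_p}}$, to follow it with $H_{n_p}$, and to measure in the computational basis; these last two operations together are just a measurement of each input qubit in the Pauli-$X$ basis, since $\bra{\widetilde{x}}H_{n_p}$ is exactly the $X$-basis bra. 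Because the factors indexed by $i$ mutually commute (all being diagonal in the computational basis) and, in Lemma \ref{lemma:X-prog circuit}, each is realised with its own fresh ancilla, all of the gadgets can be applied in parallel.

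Next I would read off the MBQC data. The vertex set is $\{p_1,\dots,p_{n_p}\}\cup\{a_1,\dots,a_{n_a}\}$, with one input vertex per qubit and one ancilla vertex per program element, every vertex initialised in $\ket{+}$. The edge set places an edge between $a_i$ and $p_j$ exactly when $\mathbf{Q}_{ij}=1$, so that the controlled-$Z$ gates prescribed by all the gadgets together are precisely the entangling operations that define this (bipartite) graph state. The measurement pattern then measures each ancilla $a_i$ in the $\theta$-basis \eqref{equ:IQP measuremnt basis} and each input $p_j$ in the Pauli-$X$ basis, while the corrections are the outcome-dependent $Z$ operators supplied by Lemma \ref{lemma:X-prog circuit}.

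The step that needs care is showing that these corrections can be applied in the genuinely measurement-based way, i.e. propagated to later measurements rather than to the still-entangled state. The byproduct operators are all Pauli-$Z$ acting on input qubits; since every gadget unitary is $Z$-diagonal, each such correction commutes through all of them and only meets the final $X$-basis measurement of its qubit, where $ZX=-XZ$ turns it into a bit flip. Deferring every correction to the end therefore reduces it to a classical relabelling of the recorded outcomes, with $\widetilde{x}_j$ to be XORed against $\bigoplus_{i:\mathbf{Q}_{ij}=1}m_i$ in terms of the ancilla results $m_i$. I would verify that this bookkeeping reproduces the signs of expression \eqref{equ:single Z-prog unitary term operating on basis} term by term, which is the main—though essentially routine—obstacle.

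Efficiency is then immediate: the graph has $n_p+n_a$ vertices and at most $n_p n_a$ edges, with $n_a$ polynomial by the definition of an $X$-program, and both the measurement angles and the classical correction function are computable in polynomial time. This exhibits a graph and measurement pattern simulating the given $X$-program efficiently, as claimed.
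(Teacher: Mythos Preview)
Your proposal is correct and follows essentially the same approach as the paper: compose one copy of the gadget of Lemma~\ref{lemma:X-prog circuit} per program element on the input $\ket{+^{n_p}}$, observe that the $Z$ byproducts commute with all the controlled-$Z$ entangling operations and hence can be deferred to classical bit flips on the final $X$-basis outcomes, and read off the bipartite graph and measurement pattern. Your write-up is in fact more explicit than the paper's (you spell out the correction formula $\widetilde{x}_j\mapsto \widetilde{x}_j\oplus\bigoplus_{i:\mathbf{Q}_{ij}=1}m_i$ and the efficiency count), but the argument is the same.
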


\begin{proof}
	Producing the distribution in Eq. \eqref{equ:Z-gate X-prog distribution} can be achieved by inputting the state $\ket{+^{n_{p}}}$ into a circuit made from composing circuits like the one in Figure \ref{fig:z-prog circuit} (one for each term of the product in Eq. \eqref{equ:Z-gate X-prog distribution}) and measuring the result in the Hadamard basis. The $Z$ corrections commute with the controlled-$Z$ operations and therefore they can be moved to the end of the new, larger circuit.
	
	Because there is no dependency between the measurements, they can be performed in any order or even simultaneously. The $Z$ corrections, conditional on the measurement outcomes of the ancillary bits, can then be implemented via bit flips.
\end{proof}

A formal description of the protocol described in these proofs can be found in Algorithm \ref{alg:x-prog in MBQC} of the Appendix. We introduce some further terminology which is used in that algorithm and in the remainder of this work. 

The reader will notice that the entanglement pattern used in Algorithm \ref{alg:x-prog in MBQC} and implicit in the proof of Lemma \ref{lem:IQP graph design} is that of an \emph{undirected bipartite graph}, which we will refer to as an IQP graph.

\begin{definition}
    An \emph{undirected bipartite graph}, which we refer to as an \emph{IQP graph}, consists of a bipartition of vertices into two sets $P$ and $A$ of cardinality $n_p$ and $n_a$ respectively. We may represent such a graph by $\mathbf{Q} \in \left\{ 0 , 1 \right\}^{n_{a} \times n_{p}}$. An edge exists in the graph when $\mathbf{Q}_{ij}=1$, for $i=1,\dots,n_a$ and $j=1,\dots,n_p$. We call the set $P$ \emph{primary vertices} and the set $A$ \emph{ancillary vertices}.
\end{definition}

By referring to the bottom qubit of Figure \ref{fig:z-prog circuit} as the ancillary qubit and the others as primary qubits we understand why this type of graph is relevant and how the $X$-program matrix $\mathbf{Q}$, interpreted and a bipartite graph, exactly describes the entanglement pattern. Throughout this work, we refer to $\mathbf{Q}$ interchangeably as a matrix corresponding to an $X$-program and a graph and the reader may wish to direct their attention to Figure \ref{fig:bipartite graph} for an example. 


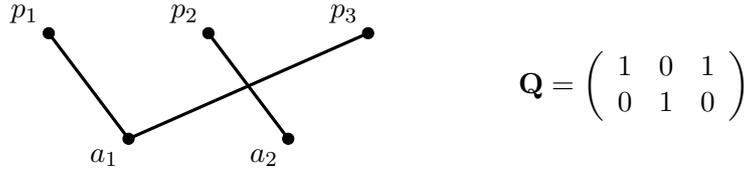
\begin{figure}
	\centering
	\begin{tikzpicture}[scale = 0.7]
		\filldraw (1.5,0) circle (3pt) node[anchor = north east] {$a_{1}$};
		\filldraw (4.5,0) circle (3pt) node[anchor = north east] {$a_{2}$};
		
		\filldraw (0,2) circle (3pt) node[anchor = south east] {$p_{1}$};
		\draw[very thick] (0,2) -- (1.5,0);
		
		\filldraw (3,2) circle (3pt) node[anchor = south east] {$p_{2}$};
		\draw[very thick] (3,2) -- (4.5,0);
		
		\filldraw (6,2) circle (3pt) node[anchor = south east] {$p_{3}$};
		\draw[very thick] (6,2) -- (1.5,0);
		
		\node at (11,1) {$\mathbf{Q} =   \left( 
		                                    \begin{array}{ccc}
                                                1 & 0 & 1 \\
                                                0 & 1 & 0
                                            \end{array}
                                        \right)$};
	\end{tikzpicture}
	\caption{An example of an IQP graph described by matrix $\mathbf{Q}$. Here, $n_{p} = 3$ and $n_{a} = 2$ while the partition used is $P = \left[ p_1 , p_2 , p_3 \right]$ and $A = \left[ a_1 , a_2 \right]$.}
	\label{fig:bipartite graph}
\end{figure}

\section{Blind Delegated IQP Computation}
\label{sec:Blind IQP}

The next step towards our method for verifying IQP machines is to build a method for blindly performing an IQP computation in a delegated setting. We consider a Client with limited quantum power delegating an IQP computation to a powerful Server. The novel method that we use in this work is to keep the $X$-program secret by not revealing the quantum state used. The intuition behind the method used to perform this hiding is that the Client will ask the Server to produce a quite general quantum state and then move from that one to the one that is required for the computation. If this is done in a blind way then the Server only has some knowledge of the general starting state from which any number of other quantum states may have been built. Hence, there are two key problems that to be addressed in the following subsections:

\begin{enumerate}
	\item \label{pt:BIQP problem 1} How to move from a general quantum state to a specific one representing an IQP computation.
	\item \label{pt:BIQP problem 2} How to do so secretly in a delegated setting.
\end{enumerate}


\subsection{Break and Bridge}
\label{sec:Break, Bridge Operators}

\ANNASCOMMENT{\begin{figure}
    \centering
	\begin{tikzpicture}[scale = 0.7]
	    \draw[very thick] (0,0) -- (1,1) -- (2,0) -- (3,1);
	\end{tikzpicture}
\end{figure}}

The break and bridge operations on a graph $\widetilde{G}=(\widetilde{V},\widetilde{E})$, with vertex set $\widetilde{V}$ and edge set $\widetilde{E}$, which were introduced in \cite{Unconditionally Verifiable Blind Quantum Computation, Multi-party entanglement in graph states}, are exactly those necessary to solve the `how to move' element of problem \ref{pt:BIQP problem 1}. 

\begin{definition}
	\label{def:bridge and break}
	The \emph{break} operator acts on a vertex $v \in \widetilde{V}$ of degree 2 in a graph $\widetilde{G}$. It removes $v$ from $\widetilde{V}$ and also removes any edges connected to $v$ from $\widetilde{E}$. 
  
	The \emph{bridge} operator acts also on a vertex $v \in \widetilde{V}$ of degree 2 in a graph $\widetilde{G}$. It removes $v$ from $\widetilde{V}$, removes any edges connected to $v$ from $\widetilde{E}$ and adds a new edge between the neighbours of $v$.
\end{definition}

Figure \ref{fig:bridge and break chain} gives an example of multiple applications of the bridge and break operators. Once this is translated from a graph theoretic idea to an operation on quantum states, we will have address the `how to move' component of problem \ref{pt:BIQP problem 1}. 

The \emph{extended IQP graphs}, which we define now, is the `general quantum state' also mentioned in problem \ref{pt:BIQP problem 1}.

\ANNASCOMMENT{\begin{definition}
To refer to the resulting graph we introduce the following terminology.
	\label{def:br-sub-graph}

    Given a graph, $\widetilde{G}$, a new graph, $G$ called a \emph{br-sub-graph}, is obtained from $\widetilde{G}$ by applying consecutive break or bridge operations to $\widetilde{G}$. We impose the condition that the vertices to which break and bridge operations are applied must not neighbour each other in the graph $\widetilde{G}$.
	
	A \emph{br-sup-graph}, $\widetilde{G}$, of a graph, $G$, is such that $G$ is a br-sub-graph of $\widetilde{G}$.
\end{definition}}

\begin{figure}
	\centering
	\begin{tikzpicture}[scale = 0.7]
		\filldraw (0,0) circle (3pt);
		\filldraw (0,1) circle (3pt);
		\filldraw (0,2) circle (3pt);
		\filldraw (1,0) circle (3pt);
		\filldraw (1,1) circle (3pt);
		\filldraw (1,2) circle (3pt);
		\filldraw (2,0) circle (3pt);
		\filldraw (2,1) circle (3pt);
		\filldraw (2,2) circle (3pt);
		
		\draw[very thick] (0,0) -- (0,1);
		\draw[very thick] (0,1) -- (1,0);
		\draw[very thick] (2,0) -- (2,1);
		\draw[very thick] (2,1) -- (2,2);
		\draw[very thick] (0,2) -- (1,2);
		\draw[very thick] (1,0) -- (1,1);
		\draw[very thick] (1,1) -- (2,2);
		\draw[very thick] (1,2) -- (2,2);
		
		\draw[very thick , ->] (2.5,1) -- (3.5,1);
		
		\filldraw (4,0) circle (3pt);
		\filldraw (4,1) circle (3pt);
		\filldraw (4,2) circle (3pt);
		\filldraw (5,0) circle (3pt);
		\filldraw (5,1) circle (3pt);
		\filldraw (5,2) circle (3pt);
		\filldraw (6,0) circle (3pt);
		\filldraw (6,2) circle (3pt);

		\draw[very thick] (4,0) -- (4,1);
		\draw[very thick] (4,1) -- (5,0);
		\draw[very thick] (4,2) -- (5,2);
		\draw[very thick] (5,0) -- (5,1);
		\draw[very thick] (5,1) -- (6,2);
		\draw[very thick] (5,2) -- (6,2);
		\draw[very thick] (6,2) -- (6,0);
		
		\draw[very thick , ->] (6.5,1) -- (7.5,1);
		
		\filldraw (8,0) circle (3pt);
		\filldraw (8,1) circle (3pt);
		\filldraw (8,2) circle (3pt);
		\filldraw (9,0) circle (3pt);
		\filldraw (9,1) circle (3pt);
		\filldraw (10,2) circle (3pt);
		\filldraw (10,0) circle (3pt);

		\draw[very thick] (8,0) -- (8,1);
		\draw[very thick] (8,1) -- (9,0);
		\draw[very thick] (10,0) -- (10,2);
		\draw[very thick] (9,0) -- (9,1);
		\draw[very thick] (9,1) -- (10,2);
		
	\end{tikzpicture}
	\caption{An example of a sequence of one bridge and one break operation.}
	\label{fig:bridge and break chain}
\end{figure}
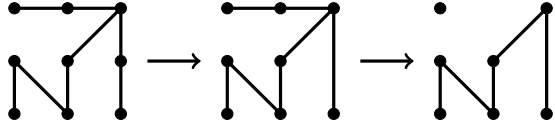

\begin{definition}
    An \emph{extended IQP graph} is represented by $\widetilde{\mathbf{Q}} \in \left\{ -1 , 0 , 1 \right\}^{n_{a} \times n_{p}}$. The vertex set contains $A = \left\{ a_{1} , ... , a_{n_{a}} \right\}$ and $P = \left\{ p_{1} , ... , p_{n_{p}} \right\}$ while $\widetilde{\mathbf{Q}}_{ij}=0$ and $\widetilde{\mathbf{Q}}_{ij}=1$ has the same implications, regarding the connections between these vertices, as in IQP graphs. 
    
    We interpret $\widetilde{\mathbf{Q}}_{ij}=-1$ as the existence of an intermediary vertex $b_k$ between vertices $p_j$ and $a_i$, and denote with $n_b$ the number of -1s in $\widetilde{\mathbf{Q}}$. As such the vertex set also includes the \emph{bridge and break vertices} $B = \left\{ b_{1} , ... , b_{n_b} \right\}$ and the edge set includes edges between $b_{k}$ and $a_{i}$ as well as $b_{k}$ and $p_{j}$ when $\widetilde{\mathbf{Q}}_{ij}=-1$. To keep track of these connections we  define the surjective function $g$ for which $g \left( i , j \right) = k $ where $b_{k}$ is the intermediate vertex connected to $a_{i}$ and $p_{j}$. 
\end{definition}

An \emph{extended IQP graph} $\widetilde{\mathbf{Q}}$ can be built from an IQP graph $\mathbf{Q}$ by replacing any number of the entries of $\mathbf{Q}$ with $-1$. Throughout the remainder of this work we will use the tilde notation to represent an extended IQP graph $\widetilde{\mathbf{Q}}$ build from an IQP graph $\mathbf{Q}$ in this way. 

Figure \ref{fig:IQP extended graph} displays an example of an extended IQP graph. By applying a bridge operator to $b_{1}$ and a break operation to $b_{2}$ in $\widetilde{\mathbf{Q}}$ of Figure \ref{fig:IQP extended graph} we arrive at $\mathbf{Q}$ of Figure \ref{fig:bipartite graph}. It is in this sense that an extended IQP graph is `more general' that an IQP graph.

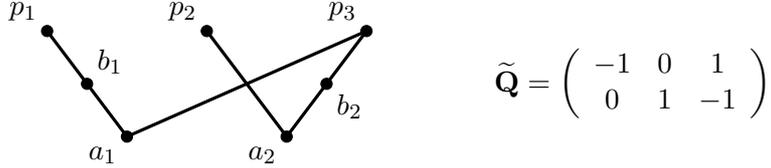
\begin{figure}
	\centering
	\begin{tikzpicture}[scale = 0.7]
		\filldraw (1.5,0) circle (3pt) node[anchor = north east] {$a_{1}$};
		\filldraw (4.5,0) circle (3pt) node[anchor = north east] {$a_{2}$};
		
		\filldraw (0,2) circle (3pt) node[anchor = south east] {$p_{1}$};
		\draw[very thick] (0,2) -- (1.5,0);
		
		\filldraw (0.75,1) circle (3pt) node[anchor = south west] {$b_{1}$};
		
		\filldraw (3,2) circle (3pt) node[anchor = south east] {$p_{2}$};
		\draw[very thick] (3,2) -- (4.5,0);
		
		\filldraw (6,2) circle (3pt) node[anchor = south east] {$p_{3}$};
		\draw[very thick] (6,2) -- (1.5,0);
		
		\draw[very thick] (6,2) -- (4.5,0);
		\filldraw (5.25,1) circle (3pt) node[anchor = north west] {$b_{2}$};
		
		\node at (11,1) {$\widetilde{\mathbf{Q}} =   \left( 
		                                    \begin{array}{ccc}
                                                -1 & 0 & 1 \\
                                                0 & 1 & -1
                                            \end{array}
                                        \right)$};
	\end{tikzpicture}
	\caption{An example of an extended  IQP graph described by matrix $\widetilde{\mathbf{Q}}$ with $(n_a,n_p,n_b)=(2,3,2)$, $P = \left[ p_1 , p_2 , p_3 \right]$ and $A = \left[ a_1 , a_2 \right]$. Two vertices $b_1$ and $b_2$ are introduced and  the function $g: \mathbb{Z}_{n_a\times n_p} \rightarrow \mathbb{Z}_{n_b}$ is defined as $g \left( 1 , 1 \right) = 1$ and $g \left( 2 , 3 \right) = 2$.}
	\label{fig:IQP extended graph}
\end{figure}


It is convenient to now introduce the following definition which allows us to use the graphs defined above to describe the entanglement pattern of quantum states.

\begin{definition}
    Consider a matrix $\mathbf{G} \in \left\{-1,0,1\right\}^{n_a\times n_p}$ and use function $g \left( i , j \right) = k$ to define index $k=1,\dots,n_b$ for the elements $\mathbf{G}_{ij}=-1$. The circuit $E_{\mathbf{G}}$ on $(n_a + n_p +n_b)$ qubits applies controlled-$Z$ operations between qubits $p_{j}$ and $a_{i}$ if $\mathbf{G}_{ij} = 1$ and, between qubits $b_{g(i,j)}$ and $a_{i}$, and, $b_{g(i,j)}$ and $p_{j}$, when $\mathbf{G}_{ij} = -1$.
\end{definition}

Using the above notation, the state built in Lemma \ref{lem:IQP graph design} is $E_{\mathbf{Q}} \ket{+}^{n_{a} + n_{p}}$. We refer to such a state, or $Z$ rotations there of, as an \emph{IQP state}. We will call states of the form $E_{\mathbf{Q}} \ket{+}^{n_{a} + n_{p}}$ or, again, their $Z$ rotations, as \emph{IQP extended state}

We can now state Lemma \ref{lem:state bridge and break correctness} which teaches us how to translate bridge and break operations from graph theoretical ideas into practical operations on quantum states. A similar lemma can be found in \cite{Unconditionally Verifiable Blind Quantum Computation}. 

\begin{lemma}
	\label{lem:state bridge and break correctness}
	Consider a quantum state $E_{\mathbf{Q}}\ket{\phi}$ where $\ket{\phi}$ is arbitrary. If $\widetilde{\mathbf{Q}}$ is an extended IQP graph built from $\mathbf{Q}$ then there exists a state $E_{\widetilde{\mathbf{Q}}}\ket{\psi}$, which can be transformed into the state $E_{\mathbf{Q}}\ket{\phi}$ through a sequence of Pauli-$Y$ basis measurements on qubits and local rotations around the Z axis on the unmeasured qubits through angles $\left\{ 0 , \frac{\pi}{2} , \pi , \frac{3 \pi}{2} \right\}$.
\end{lemma}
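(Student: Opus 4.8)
The plan is to exploit locality: reduce the full claim to a single three-qubit gadget for each bridge/break vertex, then reassemble. I would first note that both $E_{\widetilde{\mathbf{Q}}}$ and $E_{\mathbf{Q}}$ are products of controlled-$Z$ gates and that they differ only at the bridge/break vertices $b_k=b_{g(i,j)}$, each of which has degree exactly two with neighbours $a_i$ and $p_j$. Since distinct $b_k$ are pairwise non-adjacent, the controlled-$Z$ gates touching different $b_k$ commute, and so do the corresponding $Y$-measurements and $Z$-rotation corrections; hence it suffices to analyse one $b_k$ at a time while absorbing every other qubit into an arbitrary state $\ket{\Phi}$ on $\{a_i,p_j\}$. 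Writing $E_{\widetilde{\mathbf{Q}}}=\left(\prod_k \mathrm{CZ}_{a_i b_k}\,\mathrm{CZ}_{b_k p_j}\right)E_{\mathrm{rest}}$, the task becomes to understand $\mathrm{CZ}_{a_i b_k}\mathrm{CZ}_{b_k p_j}$ followed by a $Y$-measurement of $b_k$.

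For the existential choice of $\ket{\psi}$ I would prepare each $b_k$ in $\ket{+}$ at positions to be bridged ($\mathbf{Q}_{ij}=1$) and in $\ket{0}$ at positions to be broken ($\mathbf{Q}_{ij}=0$), with all remaining qubits of $\ket{\psi}$ fixed so that the common factor $E_{\mathrm{rest}}$ reproduces the corresponding part of $E_{\mathbf{Q}}\ket{\phi}$ after the corrections below. The heart of the argument is the bridge gadget. Expanding $\ket{+}=\tfrac1{\sqrt2}(\ket0+\ket1)$ on $b_k$ and applying the two controlled-$Z$ gates sends $\ket{\Phi}\otimes\ket{+}$ to $\tfrac1{\sqrt2}\big(\ket{\Phi}\ket0+Z_{a_i}Z_{p_j}\ket{\Phi}\ket1\big)$; projecting $b_k$ onto the $Y$-eigenstates $\tfrac1{\sqrt2}(\ket0\pm i\ket1)$ then yields, up to normalisation, the operator $e^{\mp i\frac\pi4 Z_{a_i}Z_{p_j}}$ acting on $\ket{\Phi}$, the sign being the measurement outcome. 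I would close this step with the diagonal-operator identity $e^{-i\frac\pi4 Z_{a_i}Z_{p_j}}=e^{i\pi/4}\,e^{-i\frac\pi4 Z_{a_i}}\,e^{-i\frac\pi4 Z_{p_j}}\,\mathrm{CZ}_{a_i p_j}$: since $Z$-rotations commute through $\mathrm{CZ}$, this exhibits the gadget output as $\mathrm{CZ}_{a_i p_j}\ket{\Phi}$ up to single-qubit $Z$-rotations by multiples of $\tfrac\pi2$ on the unmeasured qubits $a_i,p_j$, which are exactly the allowed corrections and can be undone afterwards. The two outcomes differ only by the Pauli $Z_{a_i}Z_{p_j}$, again in $\{0,\tfrac\pi2,\pi,\tfrac{3\pi}2\}$. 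This realises the edge $\mathbf{Q}_{ij}=1$. For a break vertex, preparing $b_k$ in $\ket{0}$ makes both controlled-$Z$ gates act as the identity, so $b_k$ is left unentangled, its $Y$-measurement has no effect on $\ket{\Phi}$, and no edge appears between $a_i$ and $p_j$, matching $\mathbf{Q}_{ij}=0$.

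Assembling the single-vertex analyses then gives the lemma: measuring every $b_k$ in the $Y$ basis and applying the derived $Z$-rotations turns $E_{\widetilde{\mathbf{Q}}}\ket{\psi}$ into $E_{\mathbf{Q}}\ket{\phi}$, with $\ket{\phi}$ absorbing the accumulated corrections. Because all corrections are $Z$-rotations, those that pile up on a shared neighbour (an $a_i$ adjacent to several $b_k$) commute and compose within the group generated by $\{0,\tfrac\pi2,\pi,\tfrac{3\pi}2\}$, so the order of the operations is immaterial, as anticipated in the reduction.

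I expect the main obstacle to be bookkeeping rather than anything conceptual: one must check that every outcome-dependent byproduct is genuinely a $Z$-axis rotation --- and not an $X$-type correction that the allowed operation set could not absorb --- and confirm that the single-vertex gadgets compose with no cross terms when two bridge vertices share a common primary or ancillary neighbour. A more economical route to the same steps is the graph-state calculus of \cite{Multi-party entanglement in graph states}: a $Y$-measurement of the degree-two vertex $b_k$ performs a local complementation at $b_k$ (which toggles, hence creates, the $a_i$--$p_j$ edge) followed by deletion of $b_k$, with neighbour corrections that are precisely $\sqrt{Z}$-type rotations together with a Pauli byproduct, i.e.\ the $Z$-rotations by multiples of $\tfrac\pi2$ permitted in the statement.
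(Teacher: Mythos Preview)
Your proposal is correct and follows essentially the same route as the paper: reduce to a single degree-two vertex $b_k$, prepare it in a computational-basis state for a break and a Hadamard-basis state for a bridge, show that a Pauli-$Y$ measurement leaves the neighbours with at most $S^{\pm 1}$ or $Z$ byproducts (your identity $e^{-i\frac{\pi}{4}Z_{a_i}Z_{p_j}}=e^{i\pi/4}S^{\dagger}_{a_i}S^{\dagger}_{p_j}\,\mathrm{CZ}_{a_i p_j}$ is just the paper's equations \eqref{equ:controlled Z S}--\eqref{equ:conrolled Z general} rewritten), and then extend to several $b_k$ by commutativity of the diagonal corrections with the remaining controlled-$Z$ gates. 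The only cosmetic difference is that the paper allows the slightly more general preparations $\ket{b_k}\in\{\ket{0},\ket{1}\}$ and $\ket{b_k}\in\{\ket{+},\ket{-}\}$ (anticipating the randomisation needed later for blindness), whereas you fix $\ket{0}$ and $\ket{+}$, which is all the existence statement requires.
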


The detailed proof of Lemma \ref{lem:state bridge and break correctness}, which can be found in Appendix \ref{apen:Detailed proof of Lemma state bridge and break correctness}, shows us that we can create the following state.
\begin{equation}
	\label{equ:section final state}
	\prod_{k = 1}^{n_{b}} \left( S_{p_{j}}^{(-1)^{s^{b}_{k}+r^{b}_{k}}} \otimes S_{a_{i} }^{(-1)^{s^{b}_{k}+r^{b}_{k}}} \right)^{d^{b}_{k}} \left( Z_{p_{j}}^{r^{b}_{k}} \otimes Z_{a_{j}}^{r^{b}_{k}} \right)^{1-d_{k}} E_{\mathbf{Q}}\ket{\phi}
\end{equation}
where $p_j$ and $a_i$ are the primary and ancillary qubits connected to $b_k$ respectively. 

The operations performed to achieve this are measurements of the qubits corresponding to bridge and break vertices (which we call \emph{bridge and break qubits}) of $E_{\widetilde{\mathbf{Q}}}\ket{\psi}$ in the Pauli $Y$ basis. The quantity $s_k^{b}$ is the outcome of this measurement on qubit $b_{k}$ while the quantities $r^{b}_{k}$ and $d^{b}_{k}$ tell us that said qubit was initialised in the state $\ket{b_k} = Y^{r^{b}_{k}} \sqrt{Y}^{d^{b}_{k}} \ket{0}$. 

It is possible to perform an IQP computation using this method. Although the quantum state generated using this method would equal $E_{\mathbf{Q}} \bigotimes_{1}^{n_{a} + n_{p}} \ket{+}$ up to some $S$ corrections, these corrections may be accounted for by making corrections to the primary and ancillary measurement bases (see also the circuits in Figures \ref{fig:original graph generation circuit} and \ref{fig:bridge and break graph generation circuit} in Appendix \ref{sec:Pictorial Evolution of Algorithms in This Paper}). Algorithm \ref{alg:real IQP resource honest server distributed} of the Appendix uses the methods discussed to build an IQP state.

\subsection{The Protocol}
\label{sec:security proof}

We can now address problem \ref{pt:BIQP problem 2} of the introduction to this section. To do so we use the tools of the previous section to blindly create an IQP state at the Server side. What we wish is to construct the \emph{Ideal Resource} of Figure \ref{fig:ideal resource} which takes as input from the Client an IQP computation, $\left( \mathbf{Q} , \theta \right)$, and in return gives a classical output $\widetilde{x}$. If the Server is honest, then $\widetilde{x}$ comes from the distribution corresponding to $\left( \mathbf{Q} , \theta \right)$. If the Server is dishonest, then they can input some quantum operation $\mathcal{E}$ and some quantum state $\rho_{B}$ and force the output to the Client into the classical state $\mathcal{E}\left( \mathbf{Q} , \theta , \rho_{B} \right)$. We would like for the Server only to receive a IQP extended graph $\widetilde{\mathbf{Q}}$ which can be built from $\mathbf{Q}$, the distribution $\mathcal{Q}$ over the possible $\mathbf{Q}$ from which $\widetilde{\mathbf{Q}}$ could be built and $\theta$. Let us assume that this is public knowledge. 

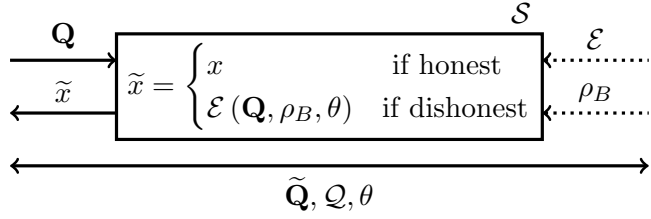
\begin{figure}
  \centering
  \begin{tikzpicture}[scale = 0.7]
     
    \draw[very thick] (0,0.5) rectangle (8,2.5) node[align = center, pos = 0.5] {$\widetilde{x} = \begin{cases} x & \text{ if honest} \\ \mathcal{E} \left( \mathbf{Q} , \rho_{B} , \theta \right) & \text{if dishonest}\end{cases}$} node[anchor = south east] {$\mathcal{S}$};
    
    \draw[very thick, ->] (-2,2) -- (0,2) node[anchor = south, pos = 0.5] {$\mathbf{Q}$};
    \draw[very thick, <-] (-2,1) -- (0,1) node[anchor = south, pos = 0.5] {$\widetilde{x}$};
    
    \draw[very thick, dotted, <-] (8,2) -- (10,2) node[anchor = south, pos = 0.5] {$\mathcal{E}$};
    \draw[very thick, dotted, <-] (8,1) -- (10,1) node[anchor = south, pos = 0.5] {$\rho_{B}$};
    
    \draw[very thick, <->] (-2,0) -- (10,0) node[anchor = north, pos = 0.5] {$\widetilde{\mathbf{Q}} , \mathcal{Q},\theta$};
  \end{tikzpicture}
  \caption{The ideal blind delegated IQP computation resource.}
     \label{fig:ideal resource}
\end{figure}


The proposed real communication protocol is described in detail by Algorithm \ref{alg:real IQP resource honest server} and graphically shown in Figure \ref{fig:real IQP computation resource}. The element of blindness is added to the work of Section \ref{sec:Break, Bridge Operators} and Algorithm \ref{alg:real IQP resource honest server distributed} by introducing some random rotations on the primary and ancillary qubits. These rotations are such that they can be corrected by rotating, in the same way, the measurement bases of those qubits, and therefore ensuring that the original IQP computation is being performed.

\begin{algorithm}
  \caption{Blind distributed IQP computation}
  \label{alg:real IQP resource honest server}
  \textbf{Public:} $\widetilde{\mathbf{Q}} , \mathcal{Q}$, $\theta$
  
  \textbf{Client input:} $\mathbf{Q}$ 
  
  \textbf{Client output:} $\widetilde{x}$ 
  
  \textbf{Protocol:}
  
  \begin{algorithmic}[1]
    \STATE The Client randomly generates $r^{p} , d^{p} \in \left\{ 0 , 1 \right\}^{n_{p}}$ and $r^{a} , d^{a} \in \left\{ 0 , 1 \right\}^{n_{a}}$ where $n_{p}$ and $n_{a}$ are the numbers of primary and ancillary qubits respectively.
    \label{alg line:real IQP resource honest server - primary and ancila random key generation}
    \STATE The Client generates the states $\ket{p_{j}} = Z^{r^{p}_{j}} S^{d^{p}_{j}} \ket{+} $ and $\ket{a_{i}}  =  Z^{r^{a}_{i}} S^{d^{a}_{i}} \ket{+}$ for $j \in \left\{ 1 , \dots, n_{p} \right\}$ and $i \in \left\{ 1 , \dots, n_{a} \right\}$
    \label{alg line:real IQP resource honest server - primary and ancillary state generation}
    \STATE Client creates $d^b \in \left\{0,1\right\}^{n_b}$ in the following way: For $i=1,\dots,n_a$ and $j=1,\dots,n_p$, if $\widetilde{\mathbf{Q}}_{ij}=-1$ and $\mathbf{Q}_{ij}=0$, then $d^b_k=0$ else if $\widetilde{\mathbf{Q}}_{ij}=-1$ and $\mathbf{Q}_{ij}=1$ then $d^b_k=1$. He keeps track of the relation between $k$ and $(i,j)$ via the surjective function $g: \mathbb{Z}_{n_a \times n_p} \rightarrow \mathbb{Z}_{n_b}$.  
   
    \STATE The Client generates $r^{b} \in \left\{ 0 , 1 \right\}^{n_{b}}$ at random and produces the states $\ket{b_{k}} = Y^{r^{b}_{k}} \left( \sqrt{Y} \right)^{d^{b}_{k}} \ket{0} $ for $k \in \left\{ 1 , \dots, n_{b} \right\}$
    \label{alg line:real IQP resource honest server - bridge and break state generation}
    \STATE State $\rho$ comprising of all of the Client's produced states is sent to the Server.
    \STATE The Server implements $E_{\widetilde{\mathbf{Q}}}$.
    \STATE The Server measures qubits $b_{1} , \dots, b_{n_{b}}$ in the $Y$-basis $\left\{ \ket{+^{Y}} , \ket{-^{Y}} \right\}$ and sends the outcome $s^b \in \left\{ 0 , 1 \right\}^{n_{b}}$ to the Client.
    \STATE The Client calculates $\Pi^{z} , \Pi^{s} \in \left\{ 0 , 1 \right\}^{n_{p}}$ and $A^{z} , A^{s} \in \left\{ 0 , 1 \right\}^{n_{a}}$ using equations \eqref{equ:primary Z correction term}, \eqref{equ:primary S correction term}, \eqref{equ:ancila Z correction term} and \eqref{equ:ancila S correction term}. 
    \begin{align}
      \label{equ:primary Z correction term}
      \Pi^{z}_{j} &= \sum_{i,k:g(i,j)=k} r_k^b \left( 1 - d^{b}_k \right) - r^{p}_{j} \\
      \label{equ:primary S correction term}
      \Pi^{s}_{j} &= \sum_{i,k:g(i,j)=k} (-1)^{s^{b}_k+r^{b}_k} d^{b}_k  - d^{p}_{j}\\
      \label{equ:ancila Z correction term}
      A^{z}_{i} &= \sum_{j,k:g(i,j)=k} r_k^{b} \left( 1 - d^{b}_k \right) - r^{a}_{i}\\
      \label{equ:ancila S correction term}
      A^{s}_{i} &= \sum_{j,k:g(i,j)=k}(-1)^{s^{b}_k+r^{b}_k}  d^{b}_k  - d^{a}_i
    \end{align}
    \STATE The Client sends $A \in \left\{0,1,2,3\right\}^{n_a}$ and $\Pi \in \left\{0,1,2,3\right\}^{n_p}$ for the ancillary and primary qubits respectively, where $A_{i} = A^{s}_{i} + 2 A^{z}_{i} \pmod 4$ and $\Pi_{j} = \Pi^{s}_{j} + 2 \Pi^{z}_{j} \pmod 4$.
    \STATE The Server measures the respective qubits in the basis below for the ancillary and primary qubits respectively. 
    \begin{equation}
        \label{equ:primary and ancillary measurement basis}
        S^{- A_{i} } \left\{ \ket{0_\theta} , \ket{1_\theta} \right\}  \text{ and } S^{- \Pi_{j} } \left\{ \ket{+} , \ket{-} \right\}
    \end{equation}
    The measurement outcomes $s^{p} \in \left\{ 0 , 1 \right\}^{n_{p}}$ and $s^{a} \in \left\{ 0 , 1 \right\}^{n_{a}}$ are sent to the Client.
    \STATE The Client generates and outputs $\widetilde{x} \in \left\{ 0 , 1 \right\}^{n_{p}}$ as follows. 
    \begin{equation}
     \label{equ:IQP final outcome calculation}
      \widetilde{x}_{j} = s^{p}_{j} + \sum_{i:\mathbf{Q}_{ij} = 1}  s^{a}_{i} \pmod 2
    \end{equation}
 \end{algorithmic}
 
\end{algorithm}

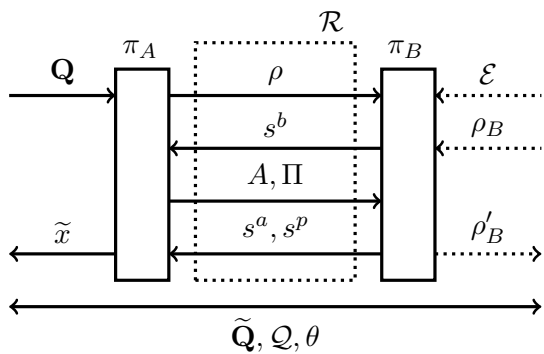
\begin{figure}
  \centering
  \begin{tikzpicture}[scale = 0.7]
    \draw[very thick] (0,0.5) rectangle (1,4.5) node[anchor = south east] {$\pi_{A}$};
    
    \draw[very thick, ->] (-2,4) -- (0,4) node[anchor = south, pos = 0.5] {$\mathbf{Q}$};
    \draw[very thick, <-] (-2,1) -- (0,1) node[anchor = south, pos = 0.5] {$\widetilde{x}$};
    
    \draw[very thick, ->] (1,4) -- (5,4) node[anchor = south, pos = 0.5] {$\rho$};
    \draw[very thick, <-] (1,3) -- (5,3) node[anchor = south, pos = 0.5] {$s^b$};
    \draw[very thick, ->] (1,2) -- (5,2) node[anchor = south, pos = 0.5] {$A,\Pi$};
    \draw[very thick, <-] (1,1) -- (5,1) node[anchor = south, pos = 0.5] {$s^a,s^p$};
    
    \draw[very thick] (5,0.5) rectangle (6,4.5) node[anchor = south east] {$\pi_{B}$};
    
    \draw[very thick, dotted, <-] (6,4) -- (8,4) node[anchor = south, pos = 0.5] {$\mathcal{E}$};
    \draw[very thick, dotted, <-] (6,3) -- (8,3) node[anchor = south, pos = 0.5] {$\rho_{B}$};
    \draw[very thick, dotted, ->] (6,1) -- (8,1) node[anchor = south, pos = 0.5] {$\rho_{B}'$};
    
    \draw[very thick, <->] (-2,0) -- (8,0) node[anchor = north, pos = 0.5] {$\widetilde{\mathbf{Q}} , \mathcal{Q}, \theta$};
    
    \draw[very thick, dotted] (1.5, 0.5) rectangle (4.5,5) node[anchor = south east] {$\mathcal{R}$};
  \end{tikzpicture}
  \caption{The real communication protocol of Algorithm \ref{alg:real IQP resource honest server}.}
  \label{fig:real IQP computation resource}
\end{figure}

During the execution of the protocol of Algorithm \ref{alg:real IQP resource honest server}, the Server sends two classical bit strings to the Client that correspond to the measurement outcomes of the sent qubits. If the Server wants to deviate from the protocol, he will again use some quantum map $\mathcal{E}$ on the information received so far together with the state $\rho_B$ he has in his own register. At the final step of the protocol the Server may output some quantum state $\rho_{B}'$.  

To prove composable security of the proposed protocol we drop the notion of a malicious Server for that of a global distinguisher that has a view of all inputs and outputs of the relevant resources. To recreate the view of a malicious Server, we develop a simulator $\sigma$ interfacing between the ideal resource $\mathcal{S}$ of Figure \ref{fig:ideal resource} and the distinguisher in such a way that the latter cannot tell the difference between an interaction with the ideal resource and the real protocol. We employ the Abstract Cryptography framework introduced in \cite{Abstract cryptography, Cryptographic security of quantum key distribution} and teleportation techniques inspired by \cite{Composable security of delegated quantum computation} to prove security in the case of a malicious Server. We will prove that:
\begin{equation}
    \pi_A\mathcal{R}\equiv \mathcal{S}\sigma
\end{equation}
where $\mathcal{R}$ is the communication channel (quantum and classical) used by the Client and the Server in the protocol.

\begin{theorem}
    \label{thm:security proof}
    The protocol described by Algorithm \ref{alg:real IQP resource honest server} is information theoretically secure against a dishonest Server. 
\end{theorem}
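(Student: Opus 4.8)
The plan is to work entirely within the Abstract Cryptography framework and establish the single indistinguishability statement $\pi_A\mathcal{R}\equiv\mathcal{S}\sigma$ for an explicitly constructed simulator $\sigma$; since this is an information-theoretic claim I would aim for perfect (not merely $\varepsilon$-close) indistinguishability, so the equivalence will be an exact equality of the states seen by any distinguisher at all interfaces. The Client converter $\pi_A$ is read off from Algorithm \ref{alg:real IQP resource honest server} (the steps generating $r^p,d^p,r^a,d^a,r^b$, forming $\rho$, computing $A,\Pi$ via Eqs.~\eqref{equ:primary Z correction term}--\eqref{equ:ancila S correction term}, and the final decoding Eq.~\eqref{equ:IQP final outcome calculation}), the adversary controls the Server interface of $\mathcal{R}$ through an arbitrary map $\mathcal{E}$, and $\sigma$ must be attached to the Server interface of the ideal resource $\mathcal{S}$ of Figure \ref{fig:ideal resource}.

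The conceptual core is a blindness argument. First I would observe that the keys $r^p,r^a,r^b$ act as quantum one-time pads: averaging each prepared qubit $Z^{r}S^{d}\ket{+}$ (resp.\ $Y^{r}(\sqrt Y)^{d}\ket 0$) over its uniform Pauli key $r$ yields the maximally mixed state, so that $\rho$, averaged over the Client's randomness, is independent of $\mathbf{Q}$ and reveals only the entanglement pattern $\widetilde{\mathbf{Q}}$. Second, the classical corrections are one-time-padded too: in Eqs.~\eqref{equ:primary Z correction term}--\eqref{equ:ancila S correction term} the $\mathbf{Q}$-dependent sums are masked additively by the fresh uniform bits $r^p_j,d^p_j,r^a_i,d^a_i$, so $\Pi_j$ and $A_i$ are uniform on $\{0,1,2,3\}$ and independent of $\mathbf{Q}$ for every fixed Server behaviour. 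The difficulty is that $\rho$ and $(A,\Pi)$ share the same keys and are therefore correlated, so one cannot twirl them independently; to decouple them I would pass to the teleportation/EPR picture of \cite{Composable security of delegated quantum computation}, replacing each ``prepare and send'' step by ``send half of an EPR pair and defer the key-fixing measurement.'' By the deferred-measurement principle this is an exact rewriting of $\pi_A\mathcal{R}$, and in it the qubits handed to the Server are literally EPR halves (maximally mixed regardless of the keys), while the correlations between the state and the keys are relegated to a local Client measurement performed after the Server has committed to $\mathcal{E}$.

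With this in place the simulator is direct: $\sigma$ reads $(\widetilde{\mathbf{Q}},\mathcal{Q},\theta)$ from $\mathcal{S}$, sends the Server EPR halves laid out according to $\widetilde{\mathbf{Q}}$ together with uniformly random $A,\Pi$, and collects the Server's responses $s^b,s^a,s^p$ and output register. It then performs the deferred measurements on its retained EPR halves to sample effective keys consistent with the $A,\Pi$ already sent, reconstructs from Lemma \ref{lem:state bridge and break correctness} and Eq.~\eqref{equ:IQP final outcome calculation} the corresponding decoding, and packages the Server's action into a map $\mathcal{E}$ (with the held register $\rho_B$) fed to the dishonest branch of $\mathcal{S}$ so that the ideal output to the Client equals the real one. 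Because the Server's total view --- EPR halves plus uniform $A,\Pi$ --- is produced identically in the real and ideal experiments, and because the Client's output is in both cases the single value determined by the Server's replies, the two experiments coincide for every distinguisher, giving $\pi_A\mathcal{R}\equiv\mathcal{S}\sigma$.

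I expect the main obstacle to be the bookkeeping in the EPR rewriting: carrying the $S$ and $\sqrt Y$ corrections from Lemma \ref{lem:state bridge and break correctness} and the bridge/break outcomes $s^b$ through the deferred measurements so that the simulator's effective keys are jointly consistent with the uniform $A,\Pi$ it has already committed to, and verifying that the resulting $\mathcal{E}$ reproduces the honest distribution Eq.~\eqref{equ:Z-gate X-prog distribution} on the honest branch and an arbitrary but faithfully-mapped deviation on the dishonest branch. Establishing that this map is well defined for \emph{every} Server deviation, rather than only for honest behaviour, is the crux of soundness and is where the argument must be most careful.
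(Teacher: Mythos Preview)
Your plan is essentially the paper's proof: replace prepare-and-send by EPR halves, defer the key-fixing measurements past the Server's actions, observe that $A,\Pi$ then become genuinely uniform and independent of $\mathbf{Q}$, and split the resulting picture into a simulator $\sigma$ plus the ideal resource $\mathcal{S}$. The chain of rewritings you sketch corresponds exactly to the paper's passage from Algorithm~\ref{alg:real IQP resource honest server} through Algorithms~\ref{alg:real IQP resource honest server with added teleportation} and~\ref{alg:real IQP resource honest server with added teleportation, rearrangement and pre-made randomness} to Algorithm~\ref{alg:real IQP resource honest server with simulator}.

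One point in your simulator description needs repair. You have $\sigma$ ``perform the deferred measurements on its retained EPR halves'' and then package the result into $\mathcal{E}$; but the deferred measurement bases (Eqs.~\eqref{equ:primary measurement term}--\eqref{equ:ancila measurement term} and the bridge/break bases $\sqrt{Y}^{\,d^b_k}\{\ket{0},\ket{1}\}$) all depend on $d^b$, which is determined by $\mathbf{Q}$. The simulator does not have access to $\mathbf{Q}$, so it cannot execute these measurements itself. The paper's resolution---and the natural one given the interface of $\mathcal{S}$ in Figure~\ref{fig:ideal resource}---is that $\sigma$ forwards the retained EPR halves together with the transcripts $s^b,A,\Pi,s^a,s^p$ to the ideal resource (this is precisely the content of $\rho_B$ and $\mathcal{E}$ at its right interface), and it is $\mathcal{S}$, which by definition holds $\mathbf{Q}$, that performs the $\mathbf{Q}$-dependent measurements and computes $\widetilde{x}$ via Eq.~\eqref{equ:IQP final outcome calculation with corrections}. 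With that relocation of the deferred measurements from $\sigma$ to $\mathcal{S}$, your argument goes through exactly as the paper's does.
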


For the sake of brevity, we give only an intuitive proof here and leave a thorough proof to Appendix \ref{thm:security of real resource appendix}.

 \begin{proof}
    The proof consists of a pattern of transformations of the real protocol of Algorithm \ref{alg:real IQP resource honest server}, into the ideal resource plus simulator setting of Algorithm \ref{alg:real IQP resource honest server with simulator}. These transformations leave the computation unchanged, therefore ensuring the indistinguishability of the two settings and so the security of the protocol. As the computation itself is not changed by the transformations we also ensure that we are still sampling from the original IQP distribution, providing evidence for the correctness of Algorithm \ref{alg:real IQP resource honest server with simulator}.

\begin{algorithm}
  \caption{Blind distributed IQP computation with simulator}
  \label{alg:real IQP resource honest server with simulator}
  \textbf{Public:} $\widetilde{\mathbf{Q}} , \mathcal{Q}$, $\theta$
  
  \textbf{Client input:} $\mathbf{Q}$ 
  
  \textbf{Client output:} $\widetilde{x}$ 
  
  \textbf{The simulator}
  \begin{algorithmic}[1]
  \STATE Generates $n_{p}+n_a+n_b$ EPR pairs and sends half of each to the ideal resource and the other half to the distinguisher.
  \STATE Receives the bitstring  $s_{b} \in \left\{ 0 , 1 \right\}^{n_{b}}$ from the distinguisher and forwards it to the ideal resource.
  \STATE Randomly generates $\Pi \in \left\{ 0 , 1 , 2 , 3 \right\}^{n_{p}}$ and $A \in \left[ 0 , 1 , 2 , 3 \right]^{n_{a}}$ and sends them to the ideal resource and distinguisher.
  \STATE Receives the bitstrings  $s^{p} \in \left\{ 0 , 1 \right\}^{n_{p}}$ and $s^{a} \in \left\{ 0 , 1 \right\}^{n_{a}}$ from the distinguisher and forwards them to the ideal resource.
  \end{algorithmic}\vspace{0.1in}
  
  \textbf{The ideal resource}
  \begin{algorithmic}[1]
  \STATE Calculates $d^b \in \left\{0,1\right\}^{n_b}$ in the following way: For $i=1,\dots,n_a$ and $j=1,\dots,n_p$, if $\widetilde{\mathbf{Q}}_{ij}=-1$ and $\mathbf{Q}_{ij}=0$, then $d^b_k=0$ else if $\widetilde{\mathbf{Q}}_{ij}=-1$ and $\mathbf{Q}_{ij}=1$ then $d^b_k=1$. Keep track of the relation between $k$ and $(i,j)$ via the surjective function $g: \mathbb{Z}_{n_a \times n_p} \rightarrow \mathbb{Z}_{n_b}$. 
  \STATE Measures the corresponding half EPR pairs in the bases $\sqrt{Y}^{d^{b}_{k}} \left\{ \ket{0} , \ket{1} \right\}$ getting outcomes $r^{b}_k$, for $k=1,\dots,n_b$.
  \STATE Calculates $d^{p} \in \left\{ 0 , 1 , 2 , 3 \right\}^{n_{p}}$ and $d^{a} \in \left\{ 0 , 1 , 2 , 3 \right\}^{n_{a}}$ using equations \eqref{equ:primary measurement term} and \eqref{equ:ancila measurement term} respectively. 
    \begin{align}
        \label{equ:primary measurement term}
        d^{p}_{j} &= \sum_{i,k:g(i,j)=k}  (-1)^{s^{b}_k+r^{b}_k}d^{b}_k + 2 \sum_{i,k:g(i,j)=k} r_k^{b} \left( 1 - d^{b}_k \right)  - \Pi_{j} \\
        \label{equ:ancila measurement term}
        d^{a}_{i} &= \sum_{j,k:g(i,j)=k} (-1)^{s^{b}_k+r^{b}_k}d^{b}_k + 2 \sum_{j,k:g(i,j)=k} r_k^{b} \left( 1 - d^{b}_k \right) - A_{i}
    \end{align}
  \STATE Measures the remaining half of the EPR pairs corresponding to the ancillary and primary qubits in the bases $S^{d^{a}_{i}} \left\{ \ket{+} , \ket{-} \right\}$ and $S^{d^{p}_{j}} \left\{ \ket{+} , \ket{-} \right\}$, getting outcomes $r^{a}_{i}$ and $r^{p}_{j}$ for $i=1,\dots,n_a$ and $j=1,\dots,n_p$ respectively.
 \STATE Generates and outputs $\widetilde{x} \in \left\{ 0 , 1 \right\}^{n_{p}}$ using equation \eqref{equ:IQP final outcome calculation with corrections}.
    \begin{equation}
        \label{equ:IQP final outcome calculation with corrections}
        \widetilde{x}_{j} = \left( s^{p}_{j} + r^{p}_{j} \right) + \sum_{i:\mathbf{Q}_{ij} = 1}  \left( s^{a}_{i} + r^{a}_{i} \right)
    \end{equation}
  \end{algorithmic}
 
\end{algorithm}
  
    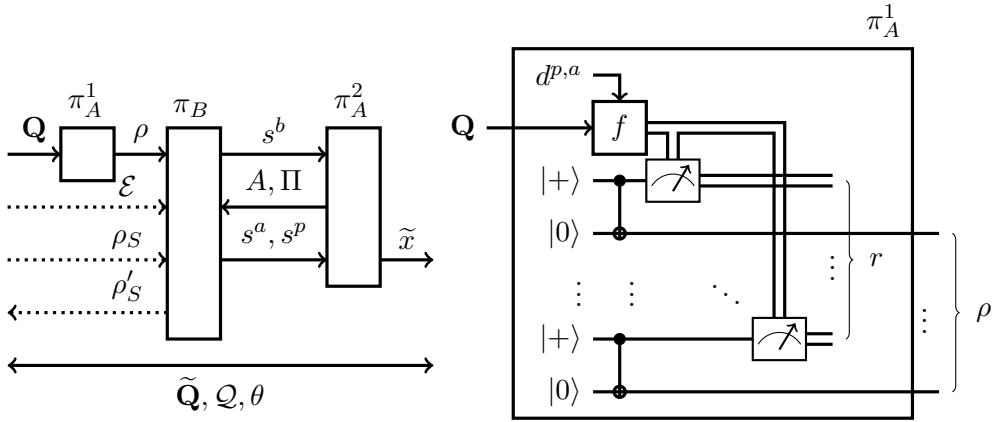
\begin{figure}
        \centering
        \begin{tikzpicture}[scale = 0.7]
            \draw[very thick, ->] (-8,5.5) -- (-7,5.5) node[anchor = south, pos = 0.5] {$\mathbf{Q}$};
            \draw[very thick] (-7,5) rectangle (-6,6) node[anchor = south east] {$\pi_{A}^{1}$};
            \draw[very thick, ->] (-6,5.5) -- (-5,5.5) node[anchor = south, pos = 0.5] {$\rho$};
    
            \draw[very thick, dotted, <-] (-5,4.5) -- (-8,4.5) node[anchor = south, pos = 0.25] {$\mathcal{E}$}; 
            \draw[very thick, dotted, <-] (-5,3.5) -- (-8,3.5) node[anchor = south, pos = 0.25] {$\rho_{S}$};
            \draw[very thick, dotted, ->] (-5,2.5) -- (-8,2.5) node[anchor = south, pos = 0.25] {$\rho_{S}'$};
    
            \draw[very thick] (-5,2) rectangle (-4,6) node[anchor = south east] {$\pi_B$};
    
            \draw[very thick, ->] (-4,5.5) -- (-2,5.5) node[anchor = south, pos = 0.5] {$s^b$};
            \draw[very thick, <-] (-4,4.5) -- (-2,4.5) node[anchor = south, pos = 0.5] {$A, \Pi$};
            \draw[very thick, ->] (-4,3.5) -- (-2,3.5) node[anchor = south, pos = 0.5] {$s^a,s^p$};
    
            \draw[very thick] (-2,3) rectangle (-1,6) node[anchor = south east] {$\pi_{A}^{2}$};
      
            \draw[very thick, ->] (-1,3.5) -- (0,3.5) node[anchor = south, pos = 0.5] {$\widetilde{x}$};
    
            \draw[very thick, <->] (-8,1.5) -- (0,1.5) node[anchor = north, pos = 0.5] {$\widetilde{\mathbf{Q}} , \mathcal{Q},\theta$};
            
            \draw[very thick] (1.5,0.5) rectangle (9,7.5) node[anchor = south east] {$\pi_{A}^{1}$};
    
            \draw[very thick, ->] (1,6) node[anchor = east] {$\mathbf{Q}$} -- (3,6);
            \draw[very thick] (3,5.5) rectangle (4,6.5) node[align = center, pos = 0.5] {$f$};
            \draw[very thick] (4,6.1) -- (6.6,6.1) -- (6.6,2.4);
            \draw[very thick] (4,5.9) -- (4.4,5.9) -- (4.4,5.4);
            \draw[very thick] (6.4,2.4) -- (6.4,5.9) -- (4.6,5.9) -- (4.6,5.4);
    
            \draw[very thick] (3,5) node[anchor = east] {$\ket{+}$} -- (4,5);
            \draw[very thick] (3,4) node[anchor = east] {$\ket{0}$} -- (9.5,4);
    
            \node[anchor = east] at (3,3) {$\vdots$};
            \node[anchor = east] at (4,3) {$\vdots$};
            \node[anchor = east] at (9.5,2.5) {$\vdots$};
    
            \draw[very thick] (3,2) node[anchor = east] {$\ket{+}$} -- (6,2);
            \draw[very thick] (3,1) node[anchor = east] {$\ket{0}$} -- (9.5,1);
    
            \filldraw (3.5,5) circle (3pt);
            \draw[very thick] (3.5,4) circle (3pt);
            \draw[very thick] (3.5,5) -- (3.5,3.9);
    
            \filldraw (3.5,2) circle (3pt);
            \draw[very thick] (3.5,1) circle (3pt);
            \draw[very thick] (3.5,2) -- (3.5,0.9);
    
            \draw[thick] (4,4.6) rectangle (5,5.4);
            \draw (4.1,4.9) .. controls (4.3,5.2) and (4.7,5.2) .. (4.9,4.9);
            \draw[thick, ->] (4.5, 4.8) -- (4.8, 5.3);
    
            \node at (5.5,3) {$\ddots$};
    
            \draw[thick] (6,1.6) rectangle (7,2.4);
            \draw (6.1,1.9) .. controls (6.3,2.2) and (6.7,2.2) .. (6.9,1.9);
            \draw[thick, ->] (6.5, 1.8) -- (6.8, 2.3);
    
            \draw[decoration={brace,mirror,raise=5pt},decorate] (9.5,1) -- (9.5,4) node[right = 10pt, pos = 0.5] {$\rho$};
    
            \draw[very thick] (5,5.1) -- (7.5,5.1);
            \draw[very thick] (5,4.9) -- (7.5,4.9);
    
            \node at (7.5,3.5) {$\vdots$};
    
            \draw[very thick] (7,2.1) -- (7.5,2.1);
            \draw[very thick] (7,1.9) -- (7.5,1.9);
    
            \draw[decoration={brace,mirror,raise=5pt},decorate] (7.5,2) -- (7.5,5) node[right = 10pt, pos = 0.5] {$r$};
    
            \draw[very thick, ->] (3,7) node[anchor = east] {$d^{p,a}$} -- (3.5,7) -- (3.5,6.5);
        \end{tikzpicture}
        \caption{The real protocol with the state generation phase of the protocol, $\pi_{A}^{1}$ isolated (left) and further analysed (right) using an equivalent protocol based on teleportation, where $f$ represents the measurement angle calculation on one half of the EPR pairs (see Algorithm \ref{alg:real IQP resource honest server with added teleportation} in Appendix for details).}
        \label{fig:real ideal resource rearranged}
    \end{figure}
    
    \ANNASCOMMENT{  
    \begin{figure}
        \centering
        \begin{tikzpicture}[scale = 0.7]
            \draw[very thick] (1.5,0.5) rectangle (9.5,7.5) node[anchor = south east] {$\pi_{A}^{1}$};
    
            \draw[very thick, ->] (1,6) node[anchor = east] {$\mathbf{Q}$} -- (3,6);
            \draw[very thick] (3,5.5) rectangle (4,6.5) node[align = center, pos = 0.5] {$f$};
            \draw[very thick] (4,6.1) -- (6.6,6.1) -- (6.6,2.4);
            \draw[very thick] (4,5.9) -- (4.4,5.9) -- (4.4,5.4);
            \draw[very thick] (6.4,2.4) -- (6.4,5.9) -- (4.6,5.9) -- (4.6,5.4);
    
            \draw[very thick] (3,5) node[anchor = east] {$\ket{+}$} -- (4,5);
            \draw[very thick] (3,4) node[anchor = east] {$\ket{1}$} -- (10,4);
    
            \node[anchor = east] at (3,3) {$\vdots$};
            \node[anchor = east] at (4,3) {$\vdots$};
            \node[anchor = east] at (10,2.5) {$\vdots$};
    
            \draw[very thick] (3,2) node[anchor = east] {$\ket{+}$} -- (6,2);
            \draw[very thick] (3,1) node[anchor = east] {$\ket{1}$} -- (10,1);
    
            \filldraw (3.5,5) circle (3pt);
            \draw[very thick] (3.5,4) circle (3pt);
            \draw[very thick] (3.5,5) -- (3.5,3.9);
    
            \filldraw (3.5,2) circle (3pt);
            \draw[very thick] (3.5,1) circle (3pt);
            \draw[very thick] (3.5,2) -- (3.5,0.9);
    
            \draw[thick] (4,4.6) rectangle (5,5.4);
            \draw (4.1,4.9) .. controls (4.3,5.2) and (4.7,5.2) .. (4.9,4.9);
            \draw[thick, ->] (4.5, 4.8) -- (4.8, 5.3);
    
            \node at (5.5,3) {$\ddots$};
    
            \draw[thick] (6,1.6) rectangle (7,2.4);
            \draw (6.1,1.9) .. controls (6.3,2.2) and (6.7,2.2) .. (6.9,1.9);
            \draw[thick, ->] (6.5, 1.8) -- (6.8, 2.3);
    
            \draw[decoration={brace,mirror,raise=5pt},decorate] (10,1) -- (10,4) node[right = 10pt, pos = 0.5] {$\rho_{\mathbf{Q},d,r}$};
    
            \draw[very thick] (5,5.1) -- (7.5,5.1);
            \draw[very thick] (5,4.9) -- (7.5,4.9);
    
            \node at (7.5,3.5) {$\vdots$};
    
            \draw[very thick] (7,2.1) -- (7.5,2.1);
            \draw[very thick] (7,1.9) -- (7.5,1.9);
    
            \draw[decoration={brace,mirror,raise=5pt},decorate] (7.5,2) -- (7.5,5) node[right = 10pt, pos = 0.5] {$r^{p,a,b}$};
    
            \draw[very thick, ->] (3,7) node[anchor = east] {$d^{p,a}$} -- (3.5,7) -- (3.5,6.5);
    
        \end{tikzpicture}
        \caption{$\pi_{A}^{1}$ detailed. Here $f$ represents the process of calculating measurement angles to be performed on one half of the EPR pair. This is the task performed by lines \ref{alg line:real IQP resource honest Server with added teleportation - primary and ancila EPR measurment} and \ref{alg line: real IQP resource honest server with added teleportation - bridge and break EPR measuremnt} of Algorithm \ref{alg:real IQP resource honest server with added teleportation}.}
        \label{fig:teleportation state generation}
    \end{figure}
    }

  Line \ref{alg line:real IQP resource honest server - primary and ancillary state generation} of Algorithm \ref{alg:real IQP resource honest server} generates at random one of the four states $\ket{+}$, $\ket{+^{Y}}$, $\ket{-}$ and $\ket{-^{Y}}$. The same effect is achieved by measuring an EPR pair with equal probability in one of the bases $\left\{ \ket{+} , \ket{-} \right\}$ and $\left\{ \ket{+^{Y}} , \ket{-^{Y}} \right\}$. The application of the $\left( \sqrt{Y} \right)^{d^{b}_{k}}$ operation in line \ref{alg line:real IQP resource honest server - bridge and break state generation} of Algorithm \ref{alg:real IQP resource honest server} decides, according to the graph to be created, if the bridge and break qubit will be drawn from the set $\left\{ \ket{+} , \ket{-} \right\}$ or $\left\{ \ket{0} , \ket{1} \right\}$. Using the same information to choose between the measurement bases $\left\{ \ket{+} , \ket{-} \right\}$ and $\left\{ \ket{0} , \ket{1} \right\}$ on one half of an EPR pair has the same effect. The random rotation $Y^{r^{b}_{k}}$ then has the same effect of the randomness that is intrinsic to the EPR pair measurement. This may be visualised in Figure \ref{fig:real ideal resource rearranged} which presents a simple rearrangement of the Real Resource of Figure \ref{fig:real IQP computation resource} in order to isolate the state generation phase $\pi_{A}^{1}$ and to examine an equivalent circuit based on teleportation.

 The next transformation is to delay the first measurement of the EPR pairs as implied in Figure \ref{fig:pre generated randomness resource}. Since information about the measurement outcome  $r$ is not yet available to define $\Pi$ and $A$, the Client chooses random $\Pi$ and $A$ which will then corrected for by using these values to compute the measurement bases for the Client's half of the primary and ancillary EPR pairs. 
 
 Finally, Figure \ref{fig:rearranged into simulator setting} simply involves a rearrangement of the players in Figure \ref{fig:pre generated randomness resource} to match those in the simulator/distinguisher setting. The formal description of the protocol displayed by Figure \ref{fig:rearranged into simulator setting} is seen in Algorithm \ref{alg:real IQP resource honest server with simulator}.
  
    \begin{figure}
        \centering
     \begin{tikzpicture}[scale = 0.65]
            \draw[very thick, ->] (0,8) -- (8,8) node[anchor = south, pos = 0.5] {$\mathbf{Q}$};
            \draw[very thick] (8,7.5) rectangle (9,8.5) node[align = center, pos = 0.5] {$\widehat{f}$};
    
            \draw[very thick] (2,6) node[anchor = east] {$\ket{+}$} -- (9,6) node[pos = 0.75, anchor = south] {$\rho$};
            \draw[very thick, ->] (2,4) node[anchor = east] {$\ket{0}$} -- (4,4);
    
            \filldraw (3,6) circle (3pt);
            \draw[very thick] (3,4) circle (3pt);
            \draw[very thick] (3,6) -- (3,3.9);
    
            \draw[very thick, <->] (5,3) -- (6,3) -- (6,7) -- (8.5,7) -- (8.5,7.5);
            \draw[very thick] (6,7) -- (5,7) node[anchor = east] {$\Pi, A$};
    
            \draw[thick] (9,5.6) rectangle (10,6.4);
            \draw (9.1,5.9) .. controls (9.3,6.2) and (9.7,6.2) .. (9.9,5.9);
            \draw[thick, ->] (9.5, 5.8) -- (9.8, 6.3);
    
            \draw[very thick] (10,6.1) -- (11,6.1) node[anchor = south, pos = 0.5] {$r$};
            \draw[very thick] (10,5.9) -- (11,5.9);
   
            \draw[very thick] (9,8.1) -- (9.6,8.1) -- (9.6,6.4);
            \draw[very thick] (9,7.9) -- (9.4,7.9) -- (9.4,6.4);
    
            \draw[very thick, dotted, <-] (4,3) -- (0,3) node[anchor = east] {$\mathcal{E}$};
            \draw[very thick, dotted, <-] (4,2) -- (0,2) node[anchor = east] {$\rho_{B}$};
            \draw[very thick, dotted, ->] (4,1) -- (0,1) node[anchor = east] {$\rho_{B}'$};
    
            \draw[very thick] (4,0.5) rectangle (5,4.5) node[anchor = south east] {$\pi_B$};
    
            \draw[very thick, ->] (5,4) -- (8,4) node[anchor = south, pos = 0.5] {$s^{b}$};
            \draw[very thick, ->] (5,2) -- (8,2) node[anchor = south, pos = 0.4] {$s^{a} , s^{p}$};
    
            \draw[very thick] (8,1.5) rectangle (9,4.5) node[anchor = south east] {$\pi_{A}^{2}$};
    
            \draw[very thick, ->] (9,2) -- (12,2) node[anchor = south, pos = 0.5] {$\widetilde{x}$};
    
    
            \draw[very thick, <->] (0,0) -- (12,0) node[anchor = north, pos = 0.5] {$\widetilde{\mathbf{Q}} , \mathcal{Q}$, $\theta$};

        \end{tikzpicture}   
        \caption{The real protocol with only one input qubit for simplicity, where the Client sends random measurement instructions $A,\Pi$ to the Server and delays the teleportation measurement until after the Server has sent the measurement outcomes $s= \left\{ s^a,s^b,s^p \right\}$. $r = \left\{ r^{p} , r^{a} , r^{b} \right\}$. Here $\widehat{f}$ represents the process of calculating measurement angles to be performed on one half of the EPR pair from Eqs. \eqref{equ:primary measurement term} and \eqref{equ:ancila measurement term} (for details see Algorithm \ref{alg:real IQP resource honest server with added teleportation, rearrangement and pre-made randomness} in the Appendix). }
        \label{fig:pre generated randomness resource}
    \end{figure}
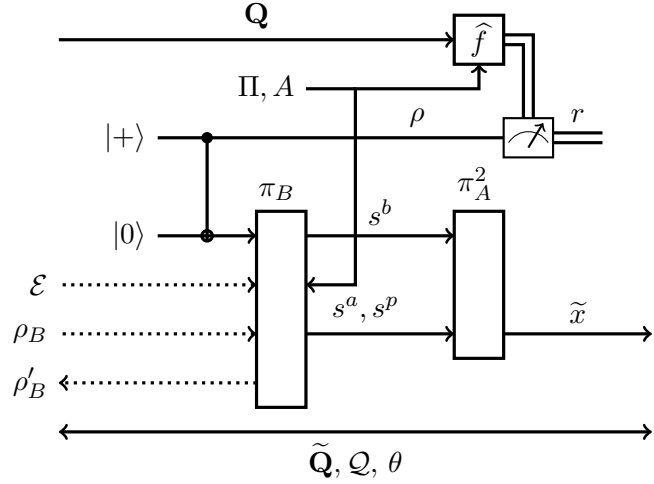

  \begin{figure}
    \centering
    \begin{tikzpicture}[scale = 0.65]
      \draw[very thick, ->] (1,7) -- (3,7) node[anchor = south, pos = 0.5] {$\mathbf{Q}$};
      \draw[very thick] (3,6.5) rectangle (4,7.5) node[align = center, pos = 0.5] {$\widehat{f}$};
    
      \draw[very thick, ->] (9,5) node[anchor = west] {$\ket{+}$} -- (5.5,5) -- (5.5,5.6);
      \draw[very thick, ->] (8,6) node[anchor = east] {$\ket{0}$} -- (11,6) node[anchor = south, pos = 0.825] {$\rho$};
      \filldraw (8.5,6) circle (3pt);
      \draw[very thick] (8.5,5) circle (3pt);
      \draw[very thick] (8.5,6) -- (8.5,4.9);
      \draw[very thick, dotted] (7,1) rectangle (10,8) node[anchor = south east] {$\sigma$};
     
      \draw[thick] (5,5.6) rectangle (6,6.4);
      \draw (5.1,5.9) .. controls (5.3,6.2) and (5.7,6.2) .. (5.9,5.9);
      \draw[thick, ->] (5.5, 5.8) -- (5.8,6.3);
     
      \draw[very thick] (4,7.1) -- (5.6,7.1) -- (5.6,6.4);
      \draw[very thick] (4,6.9) -- (5.4,6.9) -- (5.4,6.4);
     
      \draw[very thick, dotted, <-] (12,7) -- (13,7) node[anchor = south, pos = 0.5] {$\mathcal{E}$};
      \draw[very thick, dotted, <-] (12,6) -- (13,6) node[anchor = south, pos = 0.5] {$\rho_{B}$};
      \draw[very thick, dotted, ->] (12,2) -- (13,2) node[anchor = south, pos = 0.5] {$\rho_{B}'$};
     
      \draw[very thick] (11,1) rectangle (12,8) node[anchor = south east] {$\pi_B$};
     
      \draw[very thick, ->] (11,4) -- (4,4) node[anchor = south, pos = 0.35] {$s^{b}$};
      \draw[very thick, <->] (11,3) -- (4.5,3) node[anchor = south, pos = 0.4] {$\Pi , A$} -- (4.5,8) -- (3.5,8) -- (3.5,7.5);
      \draw[very thick, ->] (11,2) -- (4,2) node[anchor = south, pos = 0.35] {$s^{a} , s^{p}$};
     
      \draw[very thick] (3,1.5) rectangle (4,4.5) node[anchor = south east] {$\pi_{A}^{2}$};
     
      \draw[very thick, ->] (3,2) -- (1,2) node[anchor = south, pos = 0.5] {$\widetilde{x}$};
     
      \draw[very thick, dotted] (2.5,1) rectangle (6.5,8.5) node[anchor = south east] {$\mathcal{S}$};
     
      \draw[very thick] (5,6.1) -- (4,6.1);
      \draw[very thick] (5,5.9) -- (4,5.9);
      \node[anchor = east] at (4,6) {$r$};
     
      \draw[very thick, <->] (1,0) -- (13,0) node[anchor = north, pos = 0.5] {$\widetilde{\mathbf{Q}} , \mathcal{Q}$, $\theta$};
    \end{tikzpicture}
    \caption{The ideal resource $\mathcal{S}$ and the simulator $\sigma$ for the malicious Server, shown with only one input qubit for simplicity. The simulator has no access to the private information $\mathbf{Q}$ and any time. A global distinguisher cannot tell the difference between this setting and the real protocol.}
    \label{fig:rearranged into simulator setting}
  \end{figure}
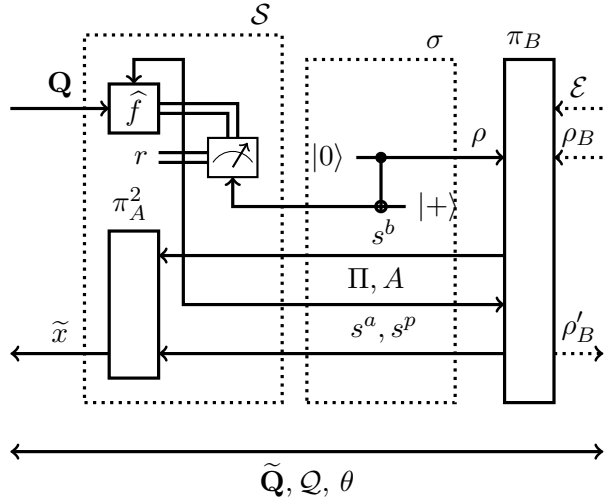
  
\end{proof}

We can now be sure that our communication protocol is indistinguishable from an ideal resource (defined in Figure \ref{fig:ideal resource}) which performs an IQP computation without communicating any information to the Server which is not already public. This means that the communication protocol does not reveal any information about the computation to the Server. Furthermore, this is proven in a composable framework \cite{Abstract cryptography,Cryptographic security of quantum key distribution,Composable security of delegated quantum computation} and so can be used as part of future protocols as we will in section \ref{sec:hypothesis test}.


\section{The Hypothesis Test}
\label{sec:hypothesis test}

\subsection{Previous work}
\label{sec:SB hypothesis test}

We now have all the tools to form a test for a Server to run in order to prove to a Client that they are capable of solving classically non-simulatable problems. Specifically, we ask the Server to perform an IQP computation that we believe is classically hard, but whose solution can easily be checked by a classical Client.

The general idea of our \emph{Hypothesis Test}, building on the work of \cite{Temporally_Unstructured_Quantum_Computation}, is that there is some hidden structure in the program elements, $\mathbf{q}_i$, of the $X$-program that results in some structure in the distribution of the outputs, known only to the Client. The Client can use this structure to check the Server's reply. 
A Server possessing an IQP machine can reproduce this structure by implementing the $X$-program. A Server not in possession of an IQP machine cannot generate outputs obeying the same rules.

We summarise this discussion by three conditions that a hypothesis test using this method must meet.

\begin{enumerate}[label = \arabic{list}.\arabic*]
	\item \label{pt:hypothesis test IQP hard} The Client asks the Server to perform an IQP computation that is hard to classically simulate. 
	\item \label{pt:hypothesis test structure} The Client can check the solution to this computation because they know some secret structure that makes this checking processes efficient.
	\item \label{pt:hypothesis test hidden structure} The Server must be unable to uncover this structure in polynomial time.
\end{enumerate}
\stepcounter{list}

\noindent A particular `known structure' of the output which is used in \cite{Temporally_Unstructured_Quantum_Computation} to satisfy condition \ref{pt:hypothesis test structure} is its \emph{bias}. 

\begin{definition}
	\label{def:bias}
	If X is a random variable taking values in $\left\{ 0 , 1 \right\}^{n_p}$ and $\mathbf{s} \in \left\{ 0 , 1 \right\}^{n_p}$ then the bias of $X$ in the direction $\mathbf{s}$ is $\mathbb{P} \left( X \cdot \mathbf{s}^T = 0 \right)$ where the product is performed modulo 2. Hence, the bias of a distribution in the direction $\mathbf{s}$ is the probability of a sample from the distribution being orthogonal to $\mathbf{s}$.
\end{definition}

To calculate the bias of $X$ in direction $\mathbf{s} \in \left\{ 0 , 1 \right\}^{n}$, we form the linear code $\mathcal{C}_{\mathbf{s}}$ by selecting all rows, $\mathbf{q}_{i}$ of the X-program, $\left(\mathbf{Q}, \theta \right) \in \left\{ 0 , 1 \right\}^{n_{a} \times n_{p}} \times \left[ 0 , 2 \pi \right]$, such that $\mathbf{q}_{i} \cdot \bf{s}^{T} = 1$ and forming, from them, a new matrix, $\mathbf{Q}_{\mathbf{s}}$, which is the generator matrix of $\mathcal{C}_{\bf{s}}$. Defining $n_{\bf{s}}$ to be the number of rows of $\mathbf{Q}_{\bf{s}}$ allows us to understand the following expression. The derivation can be found in \cite{Temporally_Unstructured_Quantum_Computation}.
\begin{equation}
	\label{equ:bias prediction}
	\mathbb{P} \left( X \cdot \mathbf{s}^{T} = 0 \right) = \mathbb{E}_{ \mathbf{c} \sim \mathcal{C}_{\mathbf{s}}} \left[ \cos^{2} \left( \theta \left( n_s - 2 \cdot \# \mathbf{c} \right) \right) \right]
\end{equation}

We find that the bias of an X-program in the direction $\bf{s}$ depends only on $\theta$ and the linear code defined by the generator matrix $\mathbf{Q}_{\mathbf{s}}$. One can now imagine a hypothesis test derived from these facts. Although the X-program that will be implemented, needs to be made public, the direction $\mathbf{s}$ which will be used for checking, will be kept secret. This gives a Client, with the computational power to calculate the quantity of expression \eqref{equ:bias prediction}, the necessary information to compute the bias, but does not afford the Server the same privilege. 

What we want to show is that the only way for the Server to produce an output with the correct bias is to use an IQP machine. If the Server could somehow uncover $\mathbf{s}$ then they could calculate the value of expression \eqref{equ:bias prediction} and return vectors to the Client which are orthogonal to $\mathbf{s}$ with the correct probability. We specialise the conditions mentioned at the beginning of this section to this particular method. 


\begin{enumerate}[label = \arabic{list}.\arabic*]
	\item \label{pt:bias hypothesis test IQP hard} The X-Program sent to a Server represents an IQP computation that is hard to classically simulate.
	\item \label{pt:bias hypothesis test structure} It must be possible for a Client, having knowledge of a secret $\mathbf{s}$ and the X-program, to calculate the quantity of expression \eqref{equ:bias prediction}.
	\item \label{pt:bias hypothesis test hidden structure} The knowledge of the Server must be insufficient to learn the value of $\bf{s}$.
\end{enumerate}
\stepcounter{list}


In \cite{Temporally_Unstructured_Quantum_Computation} the authors develop a protocol for building an $X$-program and a vector $\mathbf{s}$ performing this type of hypothesis test. The code $\mathcal{C}_{\mathbf{s}}$ used to build the $X$-program is a quadratic residue code with $\theta = \frac{\pi}{8}$. Condition \ref{pt:bias hypothesis test IQP hard} is conjectured, by \cite{Temporally_Unstructured_Quantum_Computation}, to be satisfied by these X-programs. This conjecture is supported by giving a classical simulation that is believed to be optimal and achieves maximum bias value $0.75$; different from that expected from an IQP machine. A hypothesis test with X-programs, such as the random circuits of \cite{Classical Simulation of Commuting Quantum Computations Implies Collapse of the Polynomial Hierarchy}, for which connections to an implausible collapse in the polynomial hierarchy has been made, is an open problem. Condition \ref{pt:bias hypothesis test structure} is also satisfied by the construction in \cite{Temporally_Unstructured_Quantum_Computation}, by proving that the bias value, which is $\cos^{2}\left( \frac{\pi}{8} \right)$ for their choice of X-program and $\mathbf{s}$, can be calculated in polynomial time. 

The way in which condition \ref{pt:bias hypothesis test hidden structure} is addressed in \cite{Temporally_Unstructured_Quantum_Computation} relies on the fact that the right-hand side expression of Eq.\eqref{equ:bias prediction} is equal for all generator matrices in a \emph{matroid} \cite{Matroid Theory}.

\begin{definition}
	A $i$-point binary \emph{matroid} is an equivalence class of matrices with $i$ rows, defined over $\mathbb{F}_2$. Two matrices, $\mathbf{M}_1$ and $\mathbf{M}_2$, are said to be equivalent if, for some permutation matrix $\mathbf{R}$, the column echelon reduced form of $\mathbf{M}_1$ is the same as the column echelon reduced form of $\mathbf{R} \cdot \mathbf{M}_2$ (In the case where the column dimensions do not match, we define equivalence  by deleting columns containing only $0$s after column echelon reduction).
	
\end{definition}




In order to move to a new matrix within the same matroid, consider the right-multiplication with matrix $\mathbf{A}$ on $\mathbf{Q}$. Notice that $\mathbf{q}_i \mathbf{s}^{T} = \left( \mathbf{q}_i \mathbf{A} \right) \left( \mathbf{A}^{-1} \mathbf{s}^{T} \right)$. Rows which were originally non-orthogonal to $\mathbf{s}$ are now non-orthogonal to $\mathbf{A}^{-1} \mathbf{s}^{T}$, hence we can locate $\mathbf{Q}_{\mathbf{s}}$ in $\mathbf{Q}$ by using $\mathbf{A}^{-1} \mathbf{s}^{T}$.

A way to hide $\mathbf{s}$ is therefore to randomise it with such an operation $\mathbf{A}$. We now understand what to do to the $X$-program we are considering, so that the value of the bias does not change. To increase the hiding of $\mathbf{s}$, the matrix might also include additional rows orthogonal to $\mathbf{s}$, which do not affect the value of the bias. The combination of matrix randomisation and the addition of new rows makes it hard, as conjectured in \cite{Temporally_Unstructured_Quantum_Computation}, up to some computational complexity assumptions, for the Server to recover $\mathbf{s}$ from the matrix that it receives. It is now simply a matter for the Server to implement the $X$-program and for the Client to check the bias of the output in the direction $\mathbf{s}$. This is the approach used by \cite{Temporally_Unstructured_Quantum_Computation} to address condition \ref{pt:bias hypothesis test hidden structure}.


%
%
	


\subsection{Our Protocol}
\label{sec:our protocol}
The main contribution of this work is to revisit condition \ref{pt:bias hypothesis test hidden structure}. By giving to the Client limited quantum capabilities, we remove the computational assumption of  \cite{Temporally_Unstructured_Quantum_Computation}, and therefore provide unconditional security against a powerful quantum Server. In Algorithm \ref{alg:Our hypothesis test protocol} we provide a hypothesis test that uses the blind delegated IQP computation resource of the previous section to verify quantum supremacy.

\begin{theorem}
    Algorithm \ref{alg:Our hypothesis test protocol} presents an information-theoretically secure solution to condition \ref{pt:bias hypothesis test hidden structure}.
\end{theorem}

\begin{proof}
    Let us begin by recalling that when $\mathcal{C}_{\mathbf{s}}$ in expression \eqref{equ:bias prediction} is the quadratic residue code space then we know that the value of that expression is $\cos^{2} \frac{\pi}{8}$.
    
    Notice that, in particular, the all one vector is in the quadratic residue code space. As such the matrix $\mathbf{Q}_{\mathbf{s}}$, introduced on line \ref{alg line: Our hypothesis test protocol - start with QR} of Algorithm \ref{alg:Our hypothesis test protocol},  which is the quadratic code generator matrix $\mathbf{Q}_{r}$ with a column of all ones appended to it also generates the quadratic residue code. 
    
    The vector $\mathbf{s} \in \left\{ 0 , 1 \right\}^{n_{p}}$ with all zero entries, with the exception of the last entry which is set to one, is non orthogonal to all rows of $\mathbf{Q}_{\mathbf{s}}$. Hence, adhering to the notation here and of Section \ref{sec:SB hypothesis test}, $\mathcal{C}_{\mathbf{s}}$ is the quadratic residue code and expression \eqref{equ:bias prediction} is equal to $\cos^{2} \frac{\pi}{8}$. The reader may refer to Figures \ref{fig:bipartite graph quadratic residue code} and \ref{fig:bipartite graph quadratic residue code extended by all one vector} for some intuition about the above matrices.
    
    $\mathbf{A}$, defined in line \ref{alg line: Our hypothesis test protocol - randomisation matrix} of Algorithm \ref{alg:Our hypothesis test protocol}, is the operation which adds columns of $\mathbf{Q}_{s}$, chosen when $\widehat{\mathbf{s}_{i}} = 1$, to the last column of $\mathbf{Q_{s}}$. We know that the resulting matrix, $\mathbf{Q} = \mathbf{Q_{s}} \mathbf{A}$, is also a generator matrix of the quadratic residue code as all the columns of $\mathbf{Q}_{s}$ are in the quadratic residue code space. We also know, from the discussion of Section \ref{sec:SB hypothesis test}, that all the rows of $\mathbf{Q}$ are non-orthogonal to $\mathbf{A}^{-1} \mathbf{s}^{T}$. As such $\mathcal{C}_{\mathbf{A}^{-1} \mathbf{s}^{T}}$, when $\mathbf{Q}$ is the $X$-program of concern, is the quadratic residue code space and hence the bias of the $X$-program $\mathbf{Q}$ in the direction $\mathbf{A}^{-1} \mathbf{s}^{T}$ is $\cos^{2} \frac{\pi}{8}$. This matrix may be visualised in Figure \ref{fig:bipartite graph quadratic residue code extended by all one vector randomised} and this fact is exploited in line \ref{alg line:Our hypothesis test protocol - test orthogonality} of Algorithm \ref{alg:Our hypothesis test protocol}.
    
    We know, however, that from any $\mathbf{Q}$ we can make the IQP extended graph $\widetilde{\mathbf{Q}}$, which is the matrix $\mathbf{Q}_{r}$ with a column of minus ones appended to the end. Observing Figure \ref{fig:bipartite graph quadratic residue code extended by all one vector IQP extended} may help to visualise this. We can now use the resource of Section \ref{sec:security proof} to perform a blind IQP computation.
    
    By using the blind IQP computation resource of Section \ref{sec:security proof} we have solved condition \ref{pt:bias hypothesis test hidden structure} but do so now with information theoretic security as opposed to the reliance on computational complexity assumptions used by \cite{Temporally_Unstructured_Quantum_Computation}. This is true because, as a product of using the resource of Section \ref{sec:security proof}, the Server learns only the distribution $\mathcal{Q}$ over the possible set of graphs $\mathbf{Q}$. By setting $\mathbf{Q} = \mathbf{Q_{s}} \mathbf{A}$,  Algorithm \ref{alg:Our hypothesis test protocol} develops a bijection mapping $\widehat{\mathbf{s}} \in \{0,1\}^{n_{p} - 1}$ to a unique matrix $\mathbf{Q} \in \{0,1\}^{n_{a} \times n_{p}}$. So $\mathcal{Q}$ is equivalent to the distribution from which $\widehat{\mathbf{s}}$ is drawn. In this case it is the uniform distribution over a set of size $2^{n_{p} - 1}$.
\end{proof}



\begin{figure}
	\centering
	\begin{tikzpicture}[scale = 0.7]
		\filldraw (1.5,0) circle (3pt) node[anchor = north] {$a_{1}$};
		\node at (3,0) {$\dots$};
		\filldraw (4.5,0) circle (3pt) node[anchor = north ] {$a_{n_{a}}$};
		
		\filldraw (0,2) circle (3pt) node[anchor = south ] {$p_{1}$};
		\draw[very thick, dotted] (0,2) -- (1.5,0);
		
		\filldraw (3,2) circle (3pt) node[anchor = south ] {$p_{2}$};
		\draw[very thick, dotted] (3,2) -- (4.5,0);
		
		\node at (4.5,2) {$\dots$};
		
		\filldraw (6,2) circle (3pt) node[anchor = south ] {$p_{n_{p} - 1}$};
		\draw[very thick, dotted] (6,2) -- (1.5,0);

		\node at (13,1) {$\mathbf{Q}_{r}$};
	\end{tikzpicture}
	\caption{Quadtatic residue code generator matrix, $\mathbf{Q}_{r}$, and the graph that it describes. Note that, to save space, this is only illustrative and that the connections in this image do not correspond to an actual quadratic residue code. This is implied by the dotted lines.}
	\label{fig:bipartite graph quadratic residue code}
\end{figure}
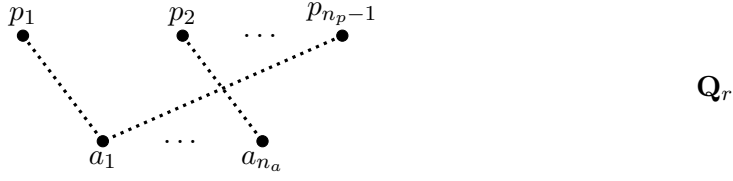

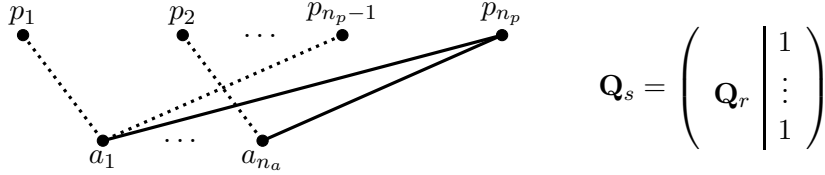
\begin{figure}
	\centering
	\begin{tikzpicture}[scale = 0.7]
	\filldraw (1.5,0) circle (3pt) node[anchor = north ] {$a_{1}$};
		\node at (3,0) {$\dots$};
		\filldraw (4.5,0) circle (3pt) node[anchor = north ] {$a_{n_{a}}$};
		
		\filldraw (0,2) circle (3pt) node[anchor = south ] {$p_{1}$};
		\draw[very thick, dotted] (0,2) -- (1.5,0);
		
		\filldraw (3,2) circle (3pt) node[anchor = south ] {$p_{2}$};
		\draw[very thick, dotted] (3,2) -- (4.5,0);
		
		\node at (4.5,2) {$\dots$};
		
		\filldraw (6,2) circle (3pt) node[anchor = south ] {$p_{n_{p} - 1}$};
		\draw[very thick, dotted] (6,2) -- (1.5,0);
		
		\filldraw (9,2) circle (3pt) node[anchor = south ] {$p_{n_{p}}$};
		
		\draw[very thick] (9,2) -- (1.5,0);
		\draw[very thick] (9,2) -- (4.5,0);
				
		\node at (13,1) {$\mathbf{Q}_{s} =   \left( 
		                                    \begin{array}{c|c}
                                                & 1\\
                                                \mathbf{Q}_{r} & \vdots\\
                                                & 1
                                            \end{array}
                                        \right)$};
	\end{tikzpicture}
	\caption{A matrix also generating the quadratic residue code space.}
	\label{fig:bipartite graph quadratic residue code extended by all one vector}
\end{figure}

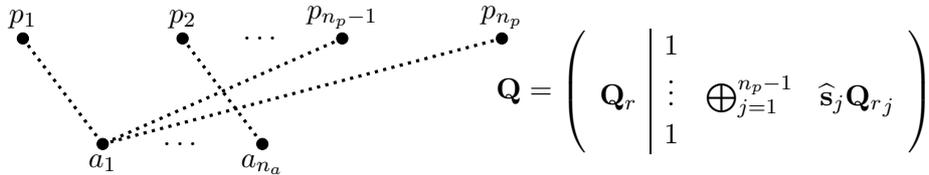
\begin{figure}
	\centering
	\begin{tikzpicture}[scale = 0.7]
		\filldraw (1.5,0) circle (3pt) node[anchor = north ] {$a_{1}$};
		\node at (3,0) {$\dots$};
		\filldraw (4.5,0) circle (3pt) node[anchor = north ] {$a_{n_{a}}$};
		
		\filldraw (0,2) circle (3pt) node[anchor = south ] {$p_{1}$};
		\draw[very thick, dotted] (0,2) -- (1.5,0);
		
		\filldraw (3,2) circle (3pt) node[anchor = south ] {$p_{2}$};
		\draw[very thick, dotted] (3,2) -- (4.5,0);
		
		\node at (4.5,2) {$\dots$};
		
		\filldraw (6,2) circle (3pt) node[anchor = south ] {$p_{n_{p} - 1}$};
		\draw[very thick, dotted] (6,2) -- (1.5,0);
		
		\filldraw (9,2) circle (3pt) node[anchor = south ] {$p_{n_{p}}$};
		
		\draw[very thick, dotted] (9,2) -- (1.5,0);
				
		\node at (13,1) {$\mathbf{Q} =   \left( 
		                                    \begin{array}{c|ccc}
                                                & 1\\
                                                \mathbf{Q}_{r} & \vdots & \bigoplus_{j = 1}^{n_{p} - 1} & \widehat{\mathbf{s}}_{j} {\mathbf{Q}_{r}}_{j} \\
                                                & 1
                                            \end{array}
                                        \right)$};
	\end{tikzpicture}
	\caption{A randomised version of Figure \ref{fig:bipartite graph quadratic residue code extended by all one vector}. Here ${\mathbf{Q}_{r}}_{j}$ is the $j^{\text{th}}$ column of $\mathbf{Q}_{r}$}
	\label{fig:bipartite graph quadratic residue code extended by all one vector randomised}
\end{figure}

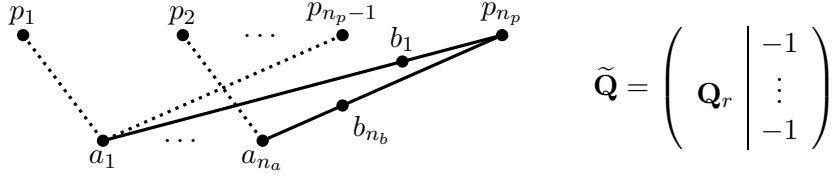
\begin{figure}
	\centering
	\begin{tikzpicture}[scale = 0.7]
		\filldraw (1.5,0) circle (3pt) node[anchor = north ] {$a_{1}$};
		\node at (3,0) {$\dots$};
		\filldraw (4.5,0) circle (3pt) node[anchor = north ] {$a_{n_{a}}$};
		
		\filldraw (0,2) circle (3pt) node[anchor = south ] {$p_{1}$};
		\draw[very thick, dotted] (0,2) -- (1.5,0);
		
		\filldraw (3,2) circle (3pt) node[anchor = south ] {$p_{2}$};
		\draw[very thick, dotted] (3,2) -- (4.5,0);
		
		\node at (4.5,2) {$\dots$};
		
		\filldraw (6,2) circle (3pt) node[anchor = south ] {$p_{n_{p} - 1}$};
		\draw[very thick, dotted] (6,2) -- (1.5,0);
		
		\filldraw (9,2) circle (3pt) node[anchor = south ] {$p_{n_{p}}$};
		
		\draw[very thick] (9,2) -- (1.5,0);
		\draw[very thick] (9,2) -- (4.5,0);
		
		\filldraw (1.5 + 45/8,6/4) circle (3pt) node[anchor = south] {$b_{1}$};
		\filldraw (4.5 + 4.5/3,2/3) circle (3pt) node[anchor = north west] {$b_{n_{b}}$};
		
		\node at (13,1) {$\widetilde{\mathbf{Q}} =   \left( 
		                                    \begin{array}{c|c}
                                                & -1\\
                                                \mathbf{Q}_{r} & \vdots\\
                                                & -1
                                            \end{array}
                                        \right)$};
	\end{tikzpicture}
	\caption{An IQP extended graph of all possible $\mathbf{Q}$ of Figure \ref{fig:bipartite graph quadratic residue code extended by all one vector randomised}}
	\label{fig:bipartite graph quadratic residue code extended by all one vector IQP extended}
\end{figure}

\begin{algorithm}
    \caption{Our hypothesis test protocol}
    \label{alg:Our hypothesis test protocol}  
    \textbf{Input:} $n_{a}$ prime such that $n_{a} + 1$ is a multiple of $8$.
    
    \textbf{Client output:} $o \in \left\{ 0 , 1 \right\}$ 
  
    \textbf{Protocol:}
  
    \begin{algorithmic}[1]
        \STATE Set $n_{p} = \frac{n_{a} + 1}{2}$
        \STATE \label{alg line: Our hypothesis test protocol - start with QR} Take the quadratic residue code generator matrix $\mathbf{Q_{r}} \in \left\{ 0 , 1 \right\}^{n_{a} \times \left( n_{p} - 1 \right)}$ 
        \STATE \label{alg line: Our hypothesis test protocol - add column of ones}Let $\mathbf{Q_{s}} \in \left\{ 0 , 1 \right\}^{n_{a} \times n_{p}}$ be $\mathbf{Q_{r}}$ with a column of ones appended to the last column. 
        \STATE Pick $\widehat{\mathbf{s}} \in \left\{ 0 , 1 \right\}^{n_p - 1}$ chosen uniformly at random. 
        \STATE \label{alg line: Our hypothesis test protocol - randomisation matrix} Define the matrix $\mathbf{A} \in \left\{ 0 , 1 \right\}^{n_p \times n_p}$ according to equation \eqref{equ:transformation matrix}.

        \begin{equation}
            \label{equ:transformation matrix}
            \mathbf{A}_{i,j} =  \begin{cases} 
                                    1 & \text{if } i = j \\ 
                                    0 & \text{if } i \neq j \text{ and } j < n_{p} \\ 
                                    \widehat{\mathbf{s}}_{i} & \text{if } j = n_{p} \text{ and } i < n_p
                                \end{cases}
        \end{equation}
    
        \STATE Set $\mathbf{Q} = \mathbf{Q_{s}} \mathbf{A}$ and $\theta = \frac{\pi}{8}$.
        \STATE Set $\widetilde{\mathbf{Q}}$ to be the matrix $\mathbf{Q_{r}}$ with a column of $-1$ appended.
        \STATE Set $\mathcal{Q}$ to be the uniform distribution over all possible $\mathbf{Q}$ for different $\widehat{\mathbf{s}}$.
        \STATE Perform the IQP computation $\mathbf{Q}$ using Algorithm \ref{alg:real IQP resource honest server} with inputs $\mathbf{Q}$, $\widetilde{\mathbf{Q}}$, $\mathcal{Q}$ and $\theta$ and outputs $\widetilde{x}$ and  $\rho_{B}'$.
        \STATE Let $\mathbf{s} \in \left\{ 0 , 1 \right\}^{n_p}$ be the vector with entries all equal to zero with the exception of the last which is set to one.
        \STATE \label{alg line:Our hypothesis test protocol - test orthogonality} Test the orthogonality of the output $\widetilde{x}$ against $A^{-1} \mathbf{s}^{T}$ setting $o = 0$ if it is not orthogonal and $o = 1$ if it is orthogonal.
    \end{algorithmic}
 
\end{algorithm}

\ANNASCOMMENT{
\begin{theorem}
    Algorithm \ref{alg:Our hypothesis test protocol} presents an information-theoretically secure solution to condition \ref{pt:bias hypothesis test hidden structure}.
\end{theorem}

\begin{proof}
    Consider $\mathbf{Q_{s}} \in \left\{ 0 , 1 \right\}^{n_{a} \times n_{p}}$ from Algorithm \ref{alg:Our hypothesis test protocol}. Since the all one vector is in the code space generated by $\mathbf{Q_{r}}$ this does not change the matroid, and therefore the bias in a direction non-orthogonal to all of the rows of $\mathbf{Q_{r}}$. Such a direction, $\mathbf{s} \in \left\{ 0 , 1 \right\}^{n_p}$, would be the vector with entries all equal to zero with the exception of the last which is set to one.
    
    $\mathbf{A}$ is the matrix which adds columns of $\mathbf{Q}_{s}$, chosen when $\widehat{\mathbf{s}_{i}} = 1$, to the last column of $\mathbf{Q_{s}}$. By setting $\mathbf{Q} = \mathbf{Q_{s}} \mathbf{A}$,  Algorithm \ref{alg:Our hypothesis test protocol} develops a bijection mapping $\widehat{\mathbf{s}} \in \{0,1\}^{n_{p} - 1}$ to a unique matrix $\mathbf{Q} \in \{0,1\}^{n_{a} \times n_{p}}$. Because of the discussion of Section \ref{sec:hypothesis test} we know that the bias of the distribution of $\mathbf{Q}$ in the direction $\mathbf{A}^{-1} \mathbf{s}^{T}$ is the same as for the quadratic residue code matroid. Hence condition \ref{pt:bias hypothesis test structure} is satisfied.

    By using the blind IQP computation resource of Section \ref{sec:security proof} we have solved condition \ref{pt:bias hypothesis test hidden structure} but do so now with information theoretic security as opposed to the reliance on computational complexity assumptions used by \cite{Temporally_Unstructured_Quantum_Computation}. This is true because, as a product of using the resource of Section \ref{sec:security proof}, the Server learns only the distribution $\mathcal{Q}$ over the possible set of graphs $\mathbf{Q}$. This, we have shown, is equivalent to the distribution from which $\widehat{\mathbf{s}}$ is drawn. In this case it is the uniform distribution over a set of size $2^{n_{p} - 1}$.
\end{proof}
}

	
\section{Conclusion and Future Work}

We have presented a protocol that can be used by a limited quantum Client, able to prepare one-qubit Pauli operator eigenstates, to delegate the construction of IQP circuits to a powerful quantum Server. 
By giving the Client of the computation limited quantum abilities (i.e. manipulation of single qubits), we have managed to remove the computational restriction of the Server required in previous work \cite{Temporally_Unstructured_Quantum_Computation}, and therefore have proven information-theoretical security against a malicious Server. The protocol is also proven to be composable and therefore can be used to verify an IQP machine as part of a larger delegated computation.

IQP circuits are also important because they are relatively easy to implement in an experimental setup in comparison to fully fledged quantum computers needed for universal computations. Our protocol requires two layers of measurements, in order to do the appropriate corrections resulting from the blind creation of the state at the Server's side, and for a small number of qubits, it can be implemented even with present technology. A future avenue of research would therefore be the study of this protocol under realistic experimental errors in view of a potential implementation.
	

\section{Acknowledgements}

The authors would like to thank Andru Gheorghiu and Petros Wallden for many enlightening discussions and feedback. This work was supported by grant EP/L01503X/1 for the University of Edinburgh School of Informatics Centre for Doctoral Training in Pervasive Parallelism (http://pervasiveparallelism.inf.ed.ac.uk/) from the UK Engineering and Physical Sciences Research Council (EPSRC) and by grants  EP/K04057X/2, EP/N003829/1 and EP/ M013243/1, as well as by the European Union’s Horizon 2020 Research and Innovation program under Marie Sklodowska-Curie Grant Agreement No. 705194.


\appendix
	
\section{Appendix}

\subsection{Detailed proof of Lemma \ref{lem:state bridge and break correctness}}
\label{apen:Detailed proof of Lemma state bridge and break correctness}
	


\begin{lemma}
	\label{lem:state bridge and break correctness detailed}
	Consider a quantum state $E_{\mathbf{Q}}\ket{\phi}$ where $\ket{\phi}$ is arbitrary. If $\widetilde{\mathbf{Q}}$ is an extended IQP graph built from $\mathbf{Q}$ then there exists a state $E_{\widetilde{\mathbf{Q}}}\ket{\psi}$, which can be transformed into the state $E_{\mathbf{Q}}\ket{\phi}$ through a sequence of Pauli-$Y$ basis measurements on qubits and local rotations around the Z axis on the unmeasured qubits through angles $\left\{ 0 , \frac{\pi}{2} , \pi , \frac{3 \pi}{2} \right\}$.
	
	
	
\end{lemma}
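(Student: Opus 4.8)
The plan is to reduce the global statement to an independent single-vertex calculation and then track the induced Clifford corrections. In the extended graph, each bridge-and-break vertex $b_k$ is adjacent only to the unique pair $p_j,a_i$ with $g(i,j)=k$, and the $b_k$ are never adjacent to one another, so the factor of $E_{\widetilde{\mathbf{Q}}}$ touching $b_k$ is exactly $CZ_{b_k,p_j}CZ_{b_k,a_i}$, and these factors commute across different $k$. I would therefore prove the claim for a single $b_k$ and compose, taking $\ket{\psi}$ to be $\ket{\phi}$ tensored with the freshly prepared qubits $\ket{b_k}=Y^{r^b_k}\sqrt{Y}^{d^b_k}\ket{0}$. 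The preparation bit $d^b_k$ is precisely the selector between bridge ($d^b_k=1$, hence a desired edge $\mathbf{Q}_{ij}=1$) and break ($d^b_k=0$, no edge, $\mathbf{Q}_{ij}=0$), so I would case-split on it.

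In the break case $\ket{b_k}=Y^{r^b_k}\ket{0}$ is a $Z$-eigenstate, so the two controlled-$Z$ gates act as the identity when $r^b_k=0$ and as $Z_{p_j}Z_{a_i}$ when $r^b_k=1$. Consequently $b_k$ stays unentangled and its $Y$-measurement contributes only a global phase, leaving the residual operator $(Z_{p_j}Z_{a_i})^{r^b_k}$ acting on the rest; since $\mathbf{Q}_{ij}=0$ there is no edge to reproduce, and this residue is exactly the $(1-d^b_k)$ branch of \eqref{equ:section final state}, removable by $Z$-rotations through angle $\pi$.

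The substance is the bridge case, where $\ket{b_k}\propto\ket{\pm}$ is an $X$-eigenstate. Projecting $b_k$ onto its $Y$-eigenstate labelled by the outcome $s^b_k$ leaves the operator $\tfrac{1}{\sqrt{2}}\left(\mathbb{I}\mp i Z_{p_j}Z_{a_i}\right)$ acting on $p_j,a_i$, with the sign set jointly by $s^b_k$ and $r^b_k$. The key identity is $\tfrac{1}{\sqrt{2}}\left(\mathbb{I}\mp i Z_{p_j}Z_{a_i}\right)=e^{\mp i\frac{\pi}{4}Z_{p_j}Z_{a_i}}$, and the right-hand side equals $CZ_{p_j,a_i}$ up to local single-qubit phase gates $S^{\pm1}_{p_j}S^{\pm1}_{a_i}$. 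Thus the $Y$-measurement installs exactly the bridged edge demanded by $\mathbf{Q}_{ij}=1$, and the only residue is a product of $S$-powers on $p_j$ and $a_i$ with exponent $(-1)^{s^b_k+r^b_k}$; these are $Z$-rotations through $\pm\pi/2$ and match the $d^b_k$ branch of \eqref{equ:section final state}.

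Finally I would reassemble. Because every residual correction is diagonal (an $S$- or $Z$-power), the corrections commute with all remaining $CZ$ gates and with each other, so they accumulate qubit-by-qubit — even on a $p_j$ or $a_i$ shared by several $b_k$ — into a single $Z$-axis rotation drawn from $\{0,\tfrac{\pi}{2},\pi,\tfrac{3\pi}{2}\}$. Composing over all $k$ shows that the $Y$-measurements alone produce precisely the state \eqref{equ:section final state}, namely $E_{\mathbf{Q}}\ket{\phi}$ dressed by these corrections, and applying the inverse $Z$-rotations returns the clean state $E_{\mathbf{Q}}\ket{\phi}$; since $\ket{\phi}$ is never used, linearity gives the result for arbitrary inputs. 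I expect the main obstacle to be purely bookkeeping: pinning down the two signs — the measurement-outcome dependence $s^b_k$ and the one-time-pad preparation dependence $r^b_k$ — so that they combine into the single exponent $(-1)^{s^b_k+r^b_k}$, and verifying that after the bridge identity no non-diagonal Clifford correction survives, so the corrections genuinely lie in the claimed $Z$-rotation set.
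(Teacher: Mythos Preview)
Your proposal is correct and follows essentially the same route as the paper's proof: a single-vertex case split on break versus bridge, the break case handled by a $Z$-eigenstate ancilla, the bridge case handled via the identity $CZ_{p_j,a_i}=(S^{\pm1}_{p_j}\otimes S^{\pm1}_{a_i})\cdot\tfrac{1}{\sqrt{2}}(\mathbb{I}\pm iZ_{p_j}Z_{a_i})$, and then composition by commutation of the diagonal corrections with the remaining $CZ$ gates. The only cosmetic difference is that the paper writes the bridge case starting from $CZ_{a_i,b_k}CZ_{p_j,b_k}CZ_{a_i,p_j}E_{\mathbf{Q}}$ (so the extra $CZ_{a_i,p_j}$ cancels the edge already present in $E_{\mathbf{Q}}$) and then shows the measurement reproduces that same $CZ$ up to $S$-powers, whereas you phrase it as the measurement ``installing'' the missing edge directly; the algebra is identical.
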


\begin{proof}

	This proof is by construction. We will define a scheme for building the state $\ket{\psi}$ and a corresponding IQP extended graph $\widetilde{\mathbf{Q}}$ meeting the conditions of the lemma.
	
	We begin by considering the case where $\widetilde{\mathbf{Q}}$ was built from the graph $\mathbf{Q}$ by replacing one of the entries, say $\left( i , j \right)$, of $\mathbf{Q}$ with $-1$. $\mathbf{Q}$ can be built from $\widetilde{\mathbf{Q}}$ either by applying a break operation to the vertex $b_{1 = g \left( i , j \right)}$, or by applying a bridge operation to this same vertex. 
	
	We now move to consider these two separate cases.
	
	\begin{itemize}
		\item \textbf{Break:} $\mathbf{Q}_{i j} = 0$. Define the state $E_{\widetilde{\mathbf{Q}}} \ket{\psi}$ as below.
		\begin{equation}
			\label{equ:initial state definition break}
			E_{\widetilde{\mathbf{Q}}} \ket{\psi} = cZ_{a_{i},b_{1}} cZ_{p_{j},b_{1}} E_{\mathbf{Q}} \ket{\phi} \ket{b_1}
		\end{equation}
		Here we set $\ket{b_1} = \ket{r^{b}_{1}}$ with $r^{b}_{1} \in \left\{ 0,1 \right\}$, and $cZ_{p_{j},b_{1}}$ and $cZ_{a_{i},b_{1}}$ are the controlled operators on the respective qubits. Notice then that $cZ_{a_{i},b_{1}} cZ_{p_{j},b_{1}} E_{\mathbf{Q}}$ indeed describes the same entanglement pattern as $E_{\widetilde{\mathbf{Q}}}$. 
		
		Applying the controlled-$Z$ operations is equivalent to applying the operator $Z^{r^{b}_{1}}$ to each of the qubits $a_i$ and $p_j$. We can conclude:
		
		\begin{equation}
			\label{equ:initial state definition break simplfied}
			E_{\widetilde{\mathbf{Q}}} \ket{\psi} = Z^{r^{b}_{1}}_{a_i} Z^{r^{b}_{1}}_{p_j} E_{\mathbf{Q}} \ket{\phi} \ket{b_1}
		\end{equation}
		
		Measuring the qubit $b_1$ in the Pauli-$Y$ basis causes a collapse to either of the Pauli-$Y$ basis states with equal likelihood. It does not, however, have any other effect on the state as the qubit $b_1$ is disentangled. We can therefore discard it and be left with the state $Z^{r^{b}_{1}}_{a_i} Z^{r^{b}_{1}}_{p_j} E_{\mathbf{Q}} \ket{\phi}$ which differs from $E_{\mathbf{Q}} \ket{\phi}$ only by local rotations about the $Z$ axis.
		
		\item \textbf{Bridge:} $\mathbf{Q}_{i j} = 1$. Define the state $E_{\widetilde{\mathbf{Q}}} \ket{\psi}$ as below.
		\begin{equation}
			\label{equ:initial state definition bridge}
			E_{\widetilde{\mathbf{Q}}} \ket{\psi} = cZ_{a_i,b_1} cZ_{p_j,b_1} cZ_{a_i,p_j} E_{\mathbf{Q}} \ket{\phi} \ket{b_1}
		\end{equation}
		Here $\ket{b_1} = \frac{1}{\sqrt{2}} \left( \ket{0} + \left( -1 \right)^{r^{b}_{1}}\ket{1} \right)$ with $r^{b}_{1} \in \left\{ 0 , 1 \right\}$ (i.e a Hadamard basis state). Notice that $cZ_{a_i,b_1} cZ_{p_j,b_1} cZ_{a_i,p_j} E_{\mathbf{Q}}$ describes the same operation as $E_{\widetilde{\mathbf{Q}}}$. 
		
		Applying the operations, $cZ_{a_i,b_1}$ and $cZ_{p_j,b_1}$ to the state $cZ_{a_i,p_j} E_{\mathbf{Q}} \ket{\phi} \ket{ b_1 }$ is equivalent to applying the following operator to the state $cZ_{a_i,p_j} E_{\mathbf{Q}} \ket{\phi}$.
		
		\begin{equation}
			\label{equ:break initial operator}
			\frac{1}{\sqrt{2}} \ket{0} \otimes \mathbb{I}_{a_i} \otimes \mathbb{I}_{p_j} + \left( -1 \right)^{r^{b}_{1}} \frac{1}{\sqrt{2}} \ket{1} \otimes Z_{a_{i}} \otimes Z_{p_{j}}
		\end{equation} 
		
		The above process followed by a measurement of qubit $b_1$ in the Pauli-$Y$ basis is equivalent to applying the following operator to $cZ_{a_i,p_j} E_{\mathbf{Q}} \ket{\phi}$. 
		\begin{equation}
			\label{equ:final bridge operator}
			\frac{1}{\sqrt{2}} \mathbb{I}_{a_{i}} \otimes \mathbb{I}_{p_{j}} + (-1)^{1-s^{b}_{1}} \left( -1 \right)^{r^{b}_{1}} i \frac{1}{\sqrt{2}} Z_{a_{i}} \otimes Z_{p_{j}}
		\end{equation}
		Here we have used the notation that $s^{b}_{1}=0$ when $\ket{+^Y}$ is measured and $s^b_{1}=1$ when $\ket{-^Y}$ is measured. The expression results from post multiplication of expression \eqref{equ:break initial operator} by $\frac{1}{\sqrt{2}} \bra{0} + \left( -1 \right)^{1-s^{b}_{1}} i \frac{1}{\sqrt{2}} \bra{1}$, the conjugate of the Pauli-$Y$ basis states, followed by the appropriate normalisation. The original state of equation \eqref{equ:initial state definition bridge} is transformed, by this measurement, to the state:

		
		\begin{equation}
			\label{equ:post measuremnt state bridge}
			\left( \frac{1}{\sqrt{2}} \mathbb{I}_{a_{i}} \otimes \mathbb{I}_{p_{j}} +  (-1)^{1-s^{b}_{1}} \left( -1 \right)^{r^{b}_{1}} i \frac{1}{\sqrt{2}} Z_{a_{i}} \otimes Z_{p_{j}} \right) cZ_{a_i,p_j} E_{\mathbf{Q}} \ket{\phi}
		\end{equation}
		Notice that the controlled-$Z$ operator can be written as:
		\begin{equation}
			\label{equ:controlled Z}
			cZ_{1,2} = \frac{1}{2} \left( \mathbb{I}_{1} \otimes \mathbb{I}_{2} + Z_{1} \otimes \mathbb{I}_{2} + \mathbb{I}_{1} \otimes Z_{2} + Z_{1} \otimes Z_{2} \right).
		\end{equation}
		Using this fact allows us to see:
		\begin{equation}
			\label{equ:controlled Z S}
			cZ_{1,2} = \left( S_{1} \otimes S_{2} \right) \left( \frac{1}{\sqrt{2}} \mathbb{I}_1 \otimes \mathbb{I}_2 + i \frac{1}{\sqrt{2}} Z_1 \otimes Z_2 \right)
		\end{equation}
		\begin{equation}
			\label{equ:controlled Z S minus}
			cZ_{1,2} = \left( S^{-1}_{1} \otimes S^{-1}_{2} \right) \left( \frac{1}{\sqrt{2}} \mathbb{I}_1 \otimes \mathbb{I}_2 - i \frac{1}{\sqrt{2}} Z_1 \otimes Z_2 \right)
		\end{equation}
		
		In particular:
		\begin{multline}
			\label{equ:conrolled Z general}
			cZ_{a_i,p_j} = \left( S_{a_i}^{(-1)^{1-s^{b}_{1}} \left( - \left( -1 \right)^{r^{b}_{1}} \right)} \otimes S_{p_j}^{(-1)^{1-s^{b}_{1}} \left( - \left( -1 \right)^{r^{b}_{1}} \right)} \right) \\ \left( \frac{1}{\sqrt{2}} \mathbb{I}_{a_i} \otimes \mathbb{I}_{p_j} - (-1)^{1-s^{b}_{1}}\left( -1 \right)^{r^{b}_{1}} i \frac{1}{\sqrt{2}} Z_{a_i} \otimes Z_{p_j} \right)
		\end{multline}
		
	Substituting this into \eqref{equ:post measuremnt state bridge}, and with some rearranging, we realise the resulting state is actually that of equation \eqref{equ:bridge final state}.
		
		\begin{equation}
			\label{equ:bridge final state}
			\left( S_{a_{1}}^{(-1)^{s^{b}_{1}+r^{b}_{1}}} \otimes S_{p_{j}}^{(-1)^{s^{b}_{1}+r^{b}_{1}}} \right) E_{\mathbf{Q}} \ket{\phi}
		\end{equation}
		
	Once again this differs from the state $E_{\mathbf{Q}} \ket{\phi}$ only by local rotations around the $Z$ axis.

	\end{itemize}
	
	We now turn to the case where the number of break and bridge operations needed to move from $\widetilde{\mathbf{Q}}$ to $\mathbf{Q}$ is more than one. The state $E_{\widetilde{\mathbf{Q}}} \ket{\psi}$ can be built one step at a time by repeating the steps above (i.e. entangling the appropriate bridge and break qubits one at a time). The proof that the state resulting from measurements of the qubits $b_{1} , ... , b_{m}$ would  result in the graph $E_{\mathbf{Q}} \ket{\phi}$ follows for the following reasoning. Since the qubits that might require corrections are never measured, all measurements and corrections commute. The entanglement operators too commute with the corrections and the measurement operations when they do not act upon the same qubits. As such all operations commute, therefore all the necessary entanglement operations, measurement operations and all the necessary corrections can be done in this order, all at once.
\end{proof}
	
\begin{algorithm}
  \caption{IQP computation}
  \label{alg:x-prog in MBQC}  
  \textbf{Input:} $\mathbf{Q} \in \left\{ 0 , 1 \right\}^{n_{a} \times n_{p}}$, $\theta$
    
  \textbf{Output:} $x \in \left\{ 0 , 1 \right\}^{n_{p}}$
  
  \textbf{Protocol:}
  
  \begin{algorithmic}[1]
    \STATE Generate states $\ket{+} = \ket{p_{j}}$ and $\ket{+} = \ket{a_{i}}$ for $j \in \left\{ 0 , ... n_{p} \right\}$ and $i \in \left\{ 0 , ... n_{a} \right\}$.
    \STATE Implement the operations $E_{\mathbf{Q}}$ on the generated qubits.
    \STATE Measure primary qubits in the Hadamard basis and ancillary qubits in the basis of equation \eqref{equ:IQP measuremnt basis} to obtain measurement outcomes $s^{p} \in \left\{ 0 , 1 \right\}^{n_{p}}$ and $s^{a} \in \left\{ 0 , 1 \right\}^{n_{a}}$.
    \STATE Perform corrections according to equation \eqref{equ:IQP final outcome calculation original} to generate output $\widetilde{x}$.
    \begin{equation}
      \label{equ:IQP final outcome calculation original}
      \widetilde{x}_{j} = s^{p}_{j} + \sum_{i:\mathbf{Q}_{ i j } = 1}  s^{a}_{i} \pmod 2
    \end{equation}
 \end{algorithmic}
 
\end{algorithm}

\begin{algorithm}
  \caption{Distributed IQP computation}
  \label{alg:real IQP resource honest server distributed}
  \textbf{Public:} $\widetilde{\mathbf{Q}} , \mathcal{Q}$, $\theta$
  
  \textbf{Client input:} $\mathbf{Q} \in \left\{ 0 , 1 \right\}^{n_{a} \times n_{p}}$ 
  
  \textbf{Client output:} $\widetilde{x}$
  
  \textbf{Protocol:}
  
  \begin{algorithmic}[1]
    \STATE The Client generates the states $\ket{+} = \ket{p_{j}}$ and $\ket{+} = \ket{a_{i}}$ for $j \in \left\{ 0 , ... n_{p} \right\}$ and $i \in \left\{ 0 , ... n_{a} \right\}$
    \label{alg line:real IQP resource honest server distributed - primary and ancillary state generation}
    \STATE Client creates $d^b \in \left\{0,1\right\}^{n_b}$ in the following way: For $i=1,\dots,n_a$ and $j=1,\dots,n_p$, if $\widetilde{\mathbf{Q}}_{ij}=-1$ and $\mathbf{Q}_{ij}=0$, then $d^b_k=0$ else if $\widetilde{\mathbf{Q}}_{ij}=-1$ and $\mathbf{Q}_{ij}=1$ then $d^b_k=1$. He keeps track of the relation between $k$ and $(i,j)$ via the surjective function $g: \mathbb{Z}_{n_a \times n_p} \rightarrow \mathbb{Z}_{n_b}$.  
    \STATE The Client generates $r^{b} \in \left\{ 0 , 1 \right\}^{n_{b}}$ at random and produces the states $\ket{b_{k}} = Y^{r^{b}_{k}} \left( \sqrt{Y} \right)^{d^{b}_{k}} \ket{0} $ for $k \in \left\{ 1 , ... , n_{b} \right\}$
    \label{alg line:real IQP resource honest server distributed - bridge and break state generation}
    \STATE State $\rho$ comprising of all of the Client's produced states is sent to the Server.
    \STATE The Server implements $E_{\widetilde{\mathbf{Q}}}$.
    \STATE The Server measures qubits $b_{1} , ... , b_{n_{b}}$ in the $Y$-basis $\left\{ \ket{+^{Y}} , \ket{-^{Y}} \right\}$ and sends the outcome $s^{b} \in \left\{ 0 , 1 \right\}^{n_{b}}$ to the Client.
    \STATE The Client calculates $\Pi^{z} , \Pi^{s} \in \left\{ 0 , 1 \right\}^{n_{p}}$ and $A^{z} , A^{s} \in \left\{ 0 , 1 \right\}^{n_{a}}$ using equations \eqref{equ:primary Z correction term not blind} - \eqref{equ:ancila S correction term not blind}. 
    \begin{align}
      \label{equ:primary Z correction term not blind}
      \Pi^{z}_{j} &= \sum_{i,k:g(i,j)=k} r_k^b \left( 1 - d^{b}_k \right) \\
      \label{equ:primary S correction term not blind}
      \Pi^{s}_{j} &= \sum_{i,k:g(i,j)=k} (-1)^{s^{b}_k+r^{b}_k} d^{b}_k  \\
      \label{equ:ancila Z correction term not blind}
      A^{z}_{i} &= \sum_{j,k:g(i,j)=k} r_k^{b} \left( 1 - d^{b}_k \right) \\
      \label{equ:ancila S correction term not blind}
      A^{s}_{i} &= \sum_{j,k:g(i,j)=k}(-1)^{s^{b}_k+r^{b}_k}  d^{b}_k 
    \end{align}
    \STATE The Client sends $A\in\{0,1,2,3\}^{n_a}$ and $\Pi\in\{0,1,2,3\}^{n_p}$ for the ancillary and primary qubits respectively, where $A_{i} = A^{s}_{i} + 2 A^{z}_{i} \pmod 4$ and $\Pi_{j} = \Pi^{s}_{j} + 2 \Pi^{z}_{j} \pmod 4$.
    \STATE The Server measures their qubits in the basis below, for the ancillary and primary qubits respectively. 
    \begin{equation}
        \label{equ:primary and ancillary measurement basis original}
        S^{- A_{i} } \left\{ \ket{0_\theta} , \ket{1_\theta} \right\}  \text{ and } S^{- \Pi_{j} } \left\{ \ket{+} , \ket{-} \right\}
    \end{equation}
    The measurement outcomes $s^{p} \in \left\{ 0 , 1 \right\}^{n_{p}}$ and $s^{a} \in \left\{ 0 , 1 \right\}^{n_{a}}$ are sent to the Client.
    \STATE The Client generates and outputs $\widetilde{x} \in \left\{ 0 , 1 \right\}^{n_{p}}$ using equation \eqref{equ:IQP final outcome calculation}
 \end{algorithmic}
 
\end{algorithm}

\ANNASCOMMENT{
\begin{algorithm}
  \caption{d calculation}
  \label{alg:d calculation}
  \textbf{Public:} $\widetilde{\mathbf{Q}}$
  
  \textbf{Client input:} $\mathbf{Q} \in \left\{ 0 , 1 \right\}^{n_{a} \times n_{p}}$
  
  \textbf{Output:} $d \in \left\{ 0 , 1 \right\}^{n_{b}}$
  
  \textbf{Protocol:}
  
  \begin{algorithmic}[1]
    \FOR{$j \in \left\{1 , \dots , n_{p} \right\}$} 
      \FOR{$i \in \left\{1 , \dots , n_{a} \right\}$} 
	\IF{$\widetilde{\mathbf{Q}}_{ij} = -1$}
	  \STATE $d_{ij}  =  \mathbf{Q}_{ij} $
	\ENDIF
      \ENDFOR 
    \ENDFOR
 \end{algorithmic}
 
\end{algorithm}
}

\subsection{Extended Proof of Theorem \ref{thm:security proof}}

\begin{theorem}
    \label{thm:security of real resource appendix}
    The protocol described by Algorithm \ref{alg:real IQP resource honest server} is secure against a dishonest Server. 
\end{theorem}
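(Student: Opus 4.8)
The plan is to prove the Abstract Cryptography equivalence $\pi_A\mathcal{R}\equiv\mathcal{S}\sigma$ by exhibiting a short chain of transformations, each of which preserves the joint distribution of everything a global distinguisher can observe, carrying the real protocol of Algorithm \ref{alg:real IQP resource honest server} into the ideal-resource-plus-simulator configuration of Algorithm \ref{alg:real IQP resource honest server with simulator}. Since every link in the chain is an exact identity rather than an approximation, the real and ideal settings are perfectly indistinguishable, which is precisely information-theoretic security; correctness for an honest Server comes for free because the computation is never altered along the way.

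The first transformation replaces each single-qubit preparation of lines \ref{alg line:real IQP resource honest server - primary and ancillary state generation} and \ref{alg line:real IQP resource honest server - bridge and break state generation} by its teleportation gadget. Preparing an EPR pair and measuring the Client's half in $\{\ket{+},\ket{-}\}$ or $\{\ket{+^Y},\ket{-^Y}\}$ (respectively $\{\ket{0},\ket{1}\}$ for a bridge-and-break qubit, as selected by $\sqrt{Y}^{d^b_k}$) leaves on the Server's half exactly the random Pauli eigenstate the protocol prescribes, with the measurement outcome realising the one-time-pad bit $r$. This is an exact identity that isolates the state-generation stage $\pi_A^1$, as depicted in Figure \ref{fig:real ideal resource rearranged}.

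The second, and most delicate, transformation commutes the Client's teleportation measurement past the Server's classical replies, as in Figure \ref{fig:pre generated randomness resource}. Naively this is impossible because the angles $A,\Pi$ sent to the Server via \eqref{equ:primary Z correction term}--\eqref{equ:ancila S correction term} depend on the not-yet-available outcomes $r$. The key observation making the commutation sound is that $A$ and $\Pi$ are one-time-padded by the independent uniform bits $r^p,r^a,d^p,d^a$, so marginally they are uniform and independent of the Server's view; hence the Client may send \emph{freshly random} $A,\Pi$ first and only afterwards measure its EPR halves in the compensating bases fixed by \eqref{equ:primary measurement term} and \eqref{equ:ancila measurement term}. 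I would verify this by checking that the post-measurement statistics, together with the final decoding \eqref{equ:IQP final outcome calculation with corrections}, reproduce those of \eqref{equ:IQP final outcome calculation} for every fixed Server map $\mathcal{E}$ and state $\rho_B$; this amounts to tracking how the $S$ and $Z$ corrections of \eqref{equ:section final state} from Lemma \ref{lem:state bridge and break correctness} push through the measurement bases.

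The final transformation is cosmetic: partition the operations of Figure \ref{fig:pre generated randomness resource} into those requiring only the public data $\widetilde{\mathbf{Q}},\mathcal{Q},\theta$ — generating the EPR pairs, forwarding the Server's messages, and sampling the random $A,\Pi$, which the simulator $\sigma$ performs — and those touching the secret input $\mathbf{Q}$ — computing $d^b$, measuring the ideal-resource halves, and emitting $\widetilde{x}$, which the ideal resource $\mathcal{S}$ of Figure \ref{fig:ideal resource} performs. This yields Figure \ref{fig:rearranged into simulator setting} and the decomposition $\mathcal{S}\sigma$, with $\sigma$ at no point accessing $\mathbf{Q}$. I expect the main obstacle to be the bookkeeping of the second step: confirming that the one-time pad truly decouples $A,\Pi$ from the distinguisher's view \emph{and} that the compensating bases restore the correct output, since both the bridge-and-break corrections and the initial pads contribute to the same modular sums, and an error in signs or parities would break either blindness or correctness.
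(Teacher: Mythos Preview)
Your proposal is correct and follows essentially the same route as the paper: the same three-step chain (replace direct preparation by EPR-pair teleportation, delay the Client's measurement using the one-time-pad structure of $A,\Pi$ to send fresh randomness first, then repartition into $\mathcal{S}$ and $\sigma$), invoking the same figures and equations at each stage. Your identification of the second step as the delicate one, with the uniformity of $A,\Pi$ coming from the pads $r^p,r^a,d^p,d^a$ and the compensating bases \eqref{equ:primary measurement term}--\eqref{equ:ancila measurement term} restoring correctness via \eqref{equ:IQP final outcome calculation with corrections}, matches the paper's argument exactly.
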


\begin{proof}
    The proof consists of a pattern of transformations of the real protocol of Algorithm \ref{alg:real IQP resource honest server}, into the ideal resource of Algorithm \ref{alg:real IQP resource honest server with simulator}, which leaves the computation unchanged, therefore ensuring the indistinguishability of the two settings. 
  
    The first transformation we perform is of the state generation phase of the Algorithm \ref{alg:real IQP resource honest server}. The new method we use for this phase is described in Algorithm \ref{alg:real IQP resource honest server with added teleportation} and relies on the measurement of EPR pairs to produce qubits in the correct basis, with some randomness resulting from the measurement. This may be visualised by the expansion of $\pi_{A}^{1}$ seen in Figure \ref{fig:pre generated randomness resource} in the main text.
  
    While lines \ref{alg line:real IQP resource honest server - primary and ancila random key generation}, \ref{alg line:real IQP resource honest server - primary and ancillary state generation} and \ref{alg line:real IQP resource honest server - bridge and break state generation} of Algorithm \ref{alg:real IQP resource honest server} and the lines \ref{alg line:real IQP resource honest server with added teleportation - primary and ancila random key generation}, \ref{alg line:real IQP resource honest server with added teleportation - EPR pair generation}, \ref{alg line:real IQP resource honest server with added teleportation - primary and ancila EPR measurment} and \ref{alg line: real IQP resource honest server with added teleportation - bridge and break EPR measuremnt} of Algorithm \ref{alg:real IQP resource honest server with added teleportation} differ the remainder of both algorithms is identical. We show now that the algorithms are indistinguishable. 
  
    Firstly consider the generation of $r^{p}$ and $r^{a}$. In Algorithm \ref{alg:real IQP resource honest server} these terms are picked uniformly at random from the set of all binary stings of the appropriate length. In the case of Algorithm \ref{alg:real IQP resource honest server with added teleportation} they are generated by measurements on EPR pairs, the result of which is entirely random. Similarly, in both cases, $r^{b}$ is picked uniformly at random from the set of all binary strings of the appropriate length.
  
    Line \ref{alg line:real IQP resource honest server - primary and ancillary state generation} of Algorithm \ref{alg:real IQP resource honest server} generates at random one of the four states $\ket{+}$, $\ket{+^{Y}}$, $\ket{-}$ and $\ket{-^{Y}}$. Line \ref{alg line:real IQP resource honest server with added teleportation - primary and ancila EPR measurment} of Algorithm \ref{alg:real IQP resource honest server with added teleportation} achieves the same effect by measuring an EPR pair with equal probability in one of the basis $\left\{ \ket{+} , \ket{-} \right\}$ and $\left\{ \ket{+^{Y}} , \ket{-^{Y}} \right\}$.
  
    Finally, the application of the $\left( \sqrt{Y} \right)^{d^{b}_{j}}$ operation in line \ref{alg line:real IQP resource honest server - bridge and break state generation} of Algorithm \ref{alg:real IQP resource honest server} decides, according the graph to be created, if the bridge and break qubit will be drawn from the set $\left\{ \ket{+} , \ket{-} \right\}$ or $\left\{ \ket{0} , \ket{1} \right\}$. Choosing between using the measurement basis $\left\{ \ket{+} , \ket{-} \right\}$ or $\left\{ \ket{0} , \ket{1} \right\}$ on one half of an EPR pair of course has the same effect. The random rotation $Y^{r^{b}}_{j}$ then has the same effect of the randomness that is intrinsic to the measurement performed in Algorithm \ref{alg:real IQP resource honest server with added teleportation}.
  
  
    Consider now the transformation from Algorithm \ref{alg:real IQP resource honest server with added teleportation} to Algorithm \ref{alg:real IQP resource honest server with added teleportation, rearrangement and pre-made randomness}. Notice that line \ref{alg line:real IQP resource honest server with added teleportation - primary and ancila EPR measurment} of Algorithm \ref{alg:real IQP resource honest server with added teleportation} is identical to that of line \ref{alg line:real IQP resource honest server with added teleportation, rearrangement and pre-made randomness - primary and ancila EPR measument} of Algorithm \ref{alg:real IQP resource honest server with added teleportation, rearrangement and pre-made randomness}. This operation can be delayed without affecting the computation as the qubit being measured is not acted upon in any other way during the protocol.
  
    Consider $\Pi$ and $A$. In Algorithm \ref{alg:real IQP resource honest server with added teleportation, rearrangement and pre-made randomness} they are generated at random from the set of all $\Pi \in \left[ 0 , 1 , 2 , 3 \right]^{n_{p}}$ and $A \in \left[ 0 , 1 , 2 , 3 \right]^{n_{a}}$ as stated in line \ref{alg line:real IQP resource honest server with added teleportation, rearrangement and pre-made randomness - primary and ancila random key generation}. This is the case too for Algorithm \ref{alg:real IQP resource honest server with added teleportation} because $\Pi^{z}_{i}$, $\Pi^{s}_{i}$, $A^{z}_{k}$ and $A^{s}_{k}$ are one time padded by $r^{p}_{i}$, $d^{p}_{i}$, $r^{a}_{k}$ and $d^{a}_{k}$ respectively as seen in equations \eqref{equ:primary Z correction term}, \eqref{equ:primary S correction term}, \eqref{equ:ancila Z correction term} and \eqref{equ:ancila S correction term}. 
  
    It remains to show that Algorithm \ref{alg:real IQP resource honest server with added teleportation, rearrangement and pre-made randomness} results in the same computation as Algorithm \ref{alg:real IQP resource honest server with added teleportation}. This can be achieved by noting a simple rearrangement of equations \eqref{equ:primary Z correction term}, \eqref{equ:primary S correction term}, \eqref{equ:ancila Z correction term} and \eqref{equ:ancila S correction term} to make $d^{p}_{i}$ and $d^{a}_{k}$ the subject. In doing so we assume the $r_{p}^{i} , r_{a}^{k} = 0$ which is corrected for, if this is not the case, in equation \eqref{equ:IQP final outcome calculation with corrections}. The reader may wish to refer to Figure \ref{fig:pre generated randomness resource} in the main text for a visualisation of this new resource.
  
    Finally, Algorithm \ref{alg:real IQP resource honest server with simulator} simply involves a relabeling of the players in the protocol of Algorithm \ref{alg:real IQP resource honest server with added teleportation, rearrangement and pre-made randomness} to match those in the simulator distinguisher setting. This amounts to the transformation from Figure \ref{fig:pre generated randomness resource} to Figure \ref{fig:rearranged into simulator setting} in the main text.
    
    This series of transformations convinces us that the following relationship is true and that the resource of Algorithm \ref{alg:real IQP resource honest server} is composably secure against a dishonest Server.
    
    \begin{equation}
        \pi_A\mathcal{R}\equiv \mathcal{S}\sigma
    \end{equation}
  
\end{proof}

\begin{algorithm}
  \caption{Blind distributed IQP computation with teleportation technique}
  \label{alg:real IQP resource honest server with added teleportation}
  \textbf{Public:} $\widetilde{\mathbf{Q}} , \mathcal{Q}$, $\theta$
  
  \textbf{Client input:} $\mathbf{Q}$ 
  
  \textbf{Client output:} $\widetilde{x}$ 
  
  \textbf{Protocol:}
  
  \begin{algorithmic}[1]
    \STATE The Client randomly generates $d^{p} \in \left\{ 0 , 1 \right\}^{n_{p}}$ and $d^{a} \in \left[ 0 , 1 \right]^{n_{a}}$ where $n_{p}$ and $n_{a}$ are the numbers of primary and ancillary qubits respectively.
    \label{alg line:real IQP resource honest server with added teleportation - primary and ancila random key generation}
    \STATE The Client generates $n_{p}$ EPR pairs $\ket{EPR^{p}_{j}}$, $n_{a}$ EPR pairs $\ket{EPR^{a}_{i}}$ and a further $n_{b}$ EPR pairs $\ket{EPR^{b}_{k}}$.
    \label{alg line:real IQP resource honest server with added teleportation - EPR pair generation}
    \STATE The Client measures one half of each of $\ket{EPR^{p}_{j}}$ in the basis $S^{d^{p}_{j}} \left\{ \ket{+} , \ket{-} \right\}$ to achieve outcome $r^{p}_{j}$ and one half of each of $\ket{EPR^{a}_{i}}$ in the basis $S^{d^{a}_{i}} \left\{ \ket{+} , \ket{-} \right\}$ to achieve outcome $r^{a}_{i}$.
    \label{alg line:real IQP resource honest server with added teleportation - primary and ancila EPR measurment}
    \STATE Client creates $d^b \in \left\{0,1\right\}^{n_b}$ in the following way: For $i=1,\dots,n_a$ and $j=1,\dots,n_p$, if $\widetilde{\mathbf{Q}}_{ij}=-1$ and $\mathbf{Q}_{ij}=0$, then $d^b_k=0$ else if $\widetilde{\mathbf{Q}}_{ij}=-1$ and $\mathbf{Q}_{ij}=1$ then $d^b_k=1$. He keeps track of the relation between $k$ and $(i,j)$ via the surjective function $g: \mathbb{Z}_{n_a \times n_p} \rightarrow \mathbb{Z}_{n_b}$.  
    \STATE The Client measures one half of each of $\ket{EPR^{b}_{k}}$ in the basis $\sqrt{Y}^{d^{b}_{k}} \left\{ \ket{0} , \ket{1} \right\}$ to achieve outcome $r^{b}_{k}$.
    \label{alg line: real IQP resource honest server with added teleportation - bridge and break EPR measuremnt}
    \STATE State $\rho$ comprising of all unmeasured states in the Client's position is sent to the server.
    \label{alg line:real IQP resource honest server with added teleportation - sending measured states}
    \STATE The Server implements $E_{\widetilde{\mathbf{Q}}}$.
    \STATE The Server measures qubits $b_{1} , ... , b_{n_{b}}$ in the $Y$-basis $\left\{ \ket{+^{Y}} , \ket{-^{Y}} \right\}$ and sends the outcome $s^{b} \in \left\{ 0 , 1 \right\}^{n_{b}}$ to the Client.
    \STATE The Client calculates $\Pi^{z} , \Pi^{s} \in \left\{ 0 , 1 \right\}^{n_{p}}$ and $A^{z} , A^{s} \in \left\{ 0 , 1 \right\}^{n_{a}}$ using equations \eqref{equ:primary Z correction term}, \eqref{equ:primary S correction term}, \eqref{equ:ancila Z correction term} and \eqref{equ:ancila S correction term}.
    \STATE The Client sends $A\in\{0,1,2,3\}^{n_a}$ and $\Pi\in\{0,1,2,3\}^{n_p}$ for the ancillary and primary qubits respectively, where $A_{i} = A^{s}_{i} + 2 A^{z}_{i} \pmod 4$ and $\Pi_{j} = \Pi^{s}_{j} + 2 \Pi^{z}_{j} \pmod 4$.
    \STATE The Server measures their qubits in the two basis of equation \eqref{equ:primary and ancillary measurement basis} for the ancillary and primary qubits respectively. The measurement outcomes $s^{p} \in \left\{ 0 , 1 \right\}^{n_{p}}$ and $s^{a} \in \left\{ 0 , 1 \right\}^{n_{a}}$ are sent to the Client.
    \STATE The Client generates and outputs $\widetilde{x} \in \left\{ 0 , 1 \right\}^{n_{p}}$ using equation \eqref{equ:IQP final outcome calculation}
  \end{algorithmic}
 
\end{algorithm}

\begin{algorithm}
  \caption{Blind distributed IQP computation with teleportation technique, rearrangement and pre-made randomness}
  \label{alg:real IQP resource honest server with added teleportation, rearrangement and pre-made randomness}
  \textbf{Public:} $\widetilde{\mathbf{Q}} , \mathcal{Q}$ , $\theta$
  
  \textbf{Client input:} $\mathbf{Q}$ 
  
  \textbf{Client output:} $\widetilde{x}$ 
  
  \textbf{Protocol:}
  
  \begin{algorithmic}[1]
    \STATE The Client generates $n_{p}$ EPR pairs $\ket{EPR^{p}_{j}}$, $n_{a}$ EPR pairs $\ket{EPR^{a}_{i}}$ and a further $n_{b}$ EPR pairs $\ket{EPR^{b}_{k}}$.
    \STATE Half of each EPR pair is sent, by the Client, to the Server.
    \STATE Client creates $d^b \in \left\{0,1\right\}^{n_b}$ in the following way: For $i=1,\dots,n_a$ and $j=1,\dots,n_p$, if $\widetilde{\mathbf{Q}}_{ij}=-1$ and $\mathbf{Q}_{ij}=0$, then $d^b_k=0$ else if $\widetilde{\mathbf{Q}}_{ij}=-1$ and $\mathbf{Q}_{ij}=1$ then $d^b_k=1$. He keeps track of the relation between $k$ and $(i,j)$ via the surjective function $g: \mathbb{Z}_{n_a \times n_p} \rightarrow \mathbb{Z}_{n_b}$.  
    \STATE The Client measures one half of each of $\ket{EPR^{b}_{k}}$ in the basis $\sqrt{Y}^{d^{b}_{k}} \left\{ \ket{0} , \ket{1} \right\}$ to achieve outcome $r^{b}_{k}$.
    \STATE The Server implements $E_{\widetilde{\mathbf{Q}}}$.
    \STATE Qubits $b_{1} , ... , b_{n_{b}}$ are measured by the Server in the $y$-basis $\left\{ \ket{+^{Y}} , \ket{-^{Y}} \right\}$, producing the outcome $s_{b} \in \left\{ 0 , 1 \right\}^{n_{b}}$ which are returned to the Client.
    \STATE The Client randomly generated $\Pi \in \left\{ 0 , 1 , 2 , 3 \right\}^{n_{p}}$ and $A \in \left[ 0 , 1 , 2 , 3 \right]^{n_{a}}$ where $n_{p}$ and $n_{a}$ are the numbers of primary and ancillary qubits respectively.
    \label{alg line:real IQP resource honest server with added teleportation, rearrangement and pre-made randomness - primary and ancila random key generation}
    \STATE The Client calculates $d^{p}_{j} \in \left\{ 0 , 1 , 2 , 3 \right\}^{n_{p}}$ and $d^{a}_{i} \in \left\{ 0 , 1 , 2 , 3 \right\}^{n_{a}}$ using equations \eqref{equ:primary measurement term} and \eqref{equ:ancila measurement term} respectively.
    \STATE The Client measures one half of each of $\ket{EPR^{p}_{j}}$ in the basis $S^{d^{p}_{j}} \left\{ \ket{+} , \ket{-} \right\}$ to achieve outcome $r^{p}_{j}$ and one half of each of $\ket{EPR^{a}_{i}}$ in the basis $S^{d^{a}_{i}} \left\{ \ket{+} , \ket{-} \right\}$ to achieve outcome $r^{a}_{i}$.
    \label{alg line:real IQP resource honest server with added teleportation, rearrangement and pre-made randomness - primary and ancila EPR measument}
    \STATE The Client sends $A\in\{0,1,2,3\}^{n_a}$ and $\Pi\in\{0,1,2,3\}^{n_p}$ for the ancillary and primary qubits respectively.
    \STATE The Server measures their qubits in the two basis of equation \eqref{equ:primary and ancillary measurement basis} for the ancillary and primary qubits respectively. The measurement outcomes $s^{p} \in \left\{ 0 , 1 \right\}^{n_{p}}$ and $s^{a} \in \left\{ 0 , 1 \right\}^{n_{a}}$ are sent to the Client.
    \STATE The Client generates and outputs $x \in \left\{ 0 , 1 \right\}^{n_{p}}$ using equation \eqref{equ:IQP final outcome calculation with corrections}.
  \end{algorithmic}
 
\end{algorithm}

\begin{landscape}

\subsection{Pictorial Evolution of Algorithms in This Paper}
\label{sec:Pictorial Evolution of Algorithms in This Paper}

\begin{figure}[H]
  \centering
  \begin{tikzpicture}
    \draw[thick] (1,1) node[anchor=east] {$d^{b}_{k}$} -- (1.5,1);
    \draw[thick] (1,2) node[anchor=east] {$\ket{p_{j}}$} -- (3,2);
    \draw[thick] (1,0) node[anchor=east] {$\ket{a_{i}}$} -- (3,0);
    
    \draw[thick] (1.5,-0.5) rectangle (2.5,2.5);
		
    \filldraw (2,0) circle (3pt);
    \filldraw (2,2) circle (3pt);
    \draw[thick] (2,0) -- (2,2);
    
    \draw[thick] (3,1.6) rectangle (4,2.4);
    \draw (3.1,1.9) .. controls (3.3,2.2) and (3.7,2.2) .. (3.9,1.9);
    \draw[thick, ->] (3.5, 1.8) -- (3.8, 2.3);
    
    \draw[thick] (3,-0.4) rectangle (4,0.4);
    \draw (3.1,-0.1) .. controls (3.3,0.2) and (3.7,0.2) .. (3.9,-0.1);
    \draw[thick, ->] (3.5, -0.2) -- (3.8, 0.3);
    
    \draw[thick] (4,0.1) -- (5,0.1);
    \draw[thick] (4,-0.1) -- (5,-0.1);
    
    \draw[thick] (4,2.1) -- (5,2.1);
    \draw[thick] (4,1.9) -- (5,1.9);
  \end{tikzpicture}
  \caption{Circuit to implement IQP. The controlled-$Z$ is controlled by $d^{b}_{k} = \mathbf{Q}_{i j}$ where $j$ and $i$ are the indices of the primary and ancillary qubits. In other words  $d^{b}_{k} = 1$ means the primary and ancillary qubits are to be entangled. This is the method described in Section \ref{sec:Preliminaries}.}
  \label{fig:original graph generation circuit}
\end{figure}
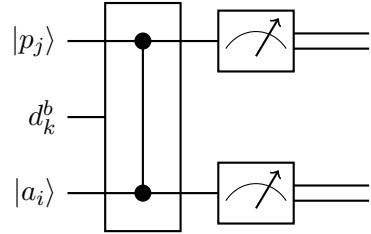

\begin{figure}[H]
  \centering
  \begin{tikzpicture}
    \draw[thick] (-3,1) node[anchor=east] {$\ket{0}$} -- (-2,1);
    
    \draw[thick] (-2,0.5) rectangle (1,1.5) node[align = center, pos = 0.5] {$\left( \sqrt{Y} \right)^{d^{b}_{k}} Y^{r^{b}_{k}}$} node[anchor = west] {(a)}; 
  
    \draw[thick] (-3,2) node[anchor=east] {$\ket{p_{j}}$} -- (4,2);
    \draw[thick] (1,1) -- (3,1);
    \draw[thick] (-3,0) node[anchor=east] {$\ket{a_{i}}$} -- (4,0);
		
    \filldraw (2,1) circle (3pt);
    \filldraw (2,2) circle (3pt);
    \draw[thick] (2,1) -- (2,2);
    
    \filldraw (2,0) circle (3pt);
    \filldraw (2,1) circle (3pt);
    \draw[thick] (2,0) -- (2,1);

    \draw[thick] (3,0.6) rectangle (4,1.4);
    \draw (3.1,0.9) .. controls (3.3,1.2) and (3.7,1.2) .. (3.9,0.9);
    \draw[thick, ->] (3.5, 0.8) -- (3.8, 1.3);

    \draw[thick] (4,1.6) rectangle (9,2.8) node[align = center, pos = 0.5] {$\left( S^{-(-1)^{s^{b}_{k}+r^{b}_{k}}} \right)^{d^{b}_{k}} \left( Z^{r^{b}_{k}} \right)^{1-d^{b}_{k}}$}node[anchor = west] {(b)};
    \draw[thick] (4,-0.8) rectangle (9,0.4) node[align = center, pos = 0.5] {$\left( S^{-(-1)^{s^{b}_{k}+r^{b}_{k}}} \right)^{d^{b}_{k}} \left( Z^{r^{b}_{k}} \right)^{1-d^{b}_{k}}$}node[anchor = west] {(b)};
    
    \draw[thick] (4,1.1) -- (5.4,1.1) -- (5.4,1.6);
    \draw[thick] (4,0.9) -- (5.4,0.9) -- (5.4,0.4);
    \draw[thick] (5.6,1.6) -- (5.6,0.4);
    
    \draw[thick] (9,2) -- (10,2);
    \draw[thick] (9,0) -- (10,0);
    
    \draw[thick] (10,1.6) rectangle (11,2.4);
    \draw (10.1,1.9) .. controls (10.3,2.2) and (10.7,2.2) .. (10.9,1.9);
    \draw[thick, ->] (10.5, 1.8) -- (10.8, 2.3);
    
    \draw[thick] (10,-0.4) rectangle (11,0.4);
    \draw (10.1,-0.1) .. controls (10.3,0.2) and (10.7,0.2) .. (10.9,-0.1);
    \draw[thick, ->] (10.5, -0.2) -- (10.8, 0.3);
    
    \draw[thick] (11,0.1) -- (12,0.1);
    \draw[thick] (11,-0.1) -- (12,-0.1);
    
    \draw[thick] (11,2.1) -- (12,2.1);
    \draw[thick] (11,1.9) -- (12,1.9);
  \end{tikzpicture}
  \caption{Circuit to implement IQP with additional intermediate qubit. This is the method described in Section \ref{sec:Break, Bridge Operators}. The gate at (a) describes the process of generating the break and bridge qubit while those at (b) display the corrections necessary as a result of the bridge and break process}
  \label{fig:bridge and break graph generation circuit}
\end{figure}
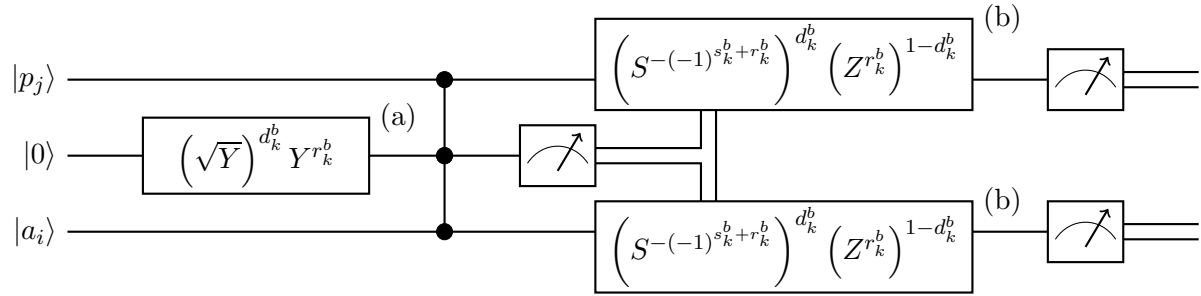

\begin{figure}[H]
  \centering
  \begin{tikzpicture}
    \draw[thick] (-5,2) node[anchor=east] {$\ket{p_{j}}$} -- (-4,2);
    \draw[thick] (-5,1) node[anchor=east] {$\ket{0}$} -- (-4,1);
    \draw[thick] (-5,0) node[anchor=east] {$\ket{a_{i}}$} -- (-4,0);
    
    \draw[thick] (-4,0.5) rectangle (-1,1.5) node[align = center, pos = 0.5] {$\left( \sqrt{Y} \right)^{d^{b}_{k}} Y^{r^{b}_{k}}$}; 
    \draw[thick] (-4,1.6) rectangle (-1,2.4) node[align = center, pos = 0.5] {$S^{d^{p}_{j}} Z^{r^{p}_{j}}$};
    \draw[thick] (-4,-0.4) rectangle (-1,0.4) node[align = center, pos = 0.5] {$S^{d^{a}_{i}} Z^{r^{a}_{i}}$};
    
    \draw[thick] (-1,2) -- (0.5,2);
    \draw[thick] (-1,0) -- (0.5,0);
  
    \draw[thick] (3,2) -- (4,2);
    \draw[thick] (-1,1) -- (3,1);
    \draw[thick] (3,0) -- (4,0);
		
    \filldraw (0,1) circle (3pt);
    \filldraw (0,2) circle (3pt);
    \draw[thick] (0,1) -- (0,2);
     
    \filldraw (0,0) circle (3pt);
    \filldraw (0,1) circle (3pt);
    \draw[thick] (0,0) -- (0,1);

    \draw[thick] (3,0.6) rectangle (4,1.4);
    \draw (3.1,0.9) .. controls (3.3,1.2) and (3.7,1.2) .. (3.9,0.9);
    \draw[thick, ->] (3.5, 0.8) -- (3.8, 1.3);
    
    \draw[thick] (0.5,1.6) rectangle (3,2.6) node[align = center, pos = 0.5] {$\left( S^{d^{p}_{j}} Z^{r^{p}_{j}} \right)^{-1}$};
    \draw[thick] (0.5,-0.6) rectangle (3,0.4) node[align = center, pos = 0.5] {$\left( S^{d^{a}_{i}} Z^{r^{a}_{i}} \right)^{-1}$};
    
    \draw[thick] (4,1.6) rectangle (9,2.8) node[align = center, pos = 0.5] {$\left( S^{-(-1)^{s^{b}_{k}+r^{b}_{k}}} \right)^{d^{b}_{k}} \left( Z^{r^{b}_{k}} \right)^{1-d^{b}_{k}}$};
    \draw[thick] (4,-0.8) rectangle (9,0.4) node[align = center, pos = 0.5] {$\left( S^{-(-1)^{s^{b}_{k}+r^{b}_{k}}} \right)^{d^{b}_{k}} \left( Z^{r^{b}_{k}} \right)^{1-d^{b}_{k}}$};
    
    \draw[thick] (4,1.1) -- (5.4,1.1) -- (5.4,1.6);
    \draw[thick] (4,0.9) -- (5.4,0.9) -- (5.4,0.4);
    \draw[thick] (5.6,1.6) -- (5.6,0.4);
    
    \draw[thick] (9,2) -- (10,2);
    \draw[thick] (9,0) -- (10,0);
    
    \draw[thick] (10,1.6) rectangle (11,2.4);
    \draw (10.1,1.9) .. controls (10.3,2.2) and (10.7,2.2) .. (10.9,1.9);
    \draw[thick, ->] (10.5, 1.8) -- (10.8, 2.3);
    
    \draw[thick] (10,-0.4) rectangle (11,0.4);
    \draw (10.1,-0.1) .. controls (10.3,0.2) and (10.7,0.2) .. (10.9,-0.1);
    \draw[thick, ->] (10.5, -0.2) -- (10.8, 0.3);
    
    \draw[thick] (11,0.1) -- (12,0.1);
    \draw[thick] (11,-0.1) -- (12,-0.1);
    
    \draw[thick] (11,2.1) -- (12,2.1);
    \draw[thick] (11,1.9) -- (12,1.9);
    
    \draw[very thick, dotted] (-4.1,-0.5) rectangle (-0.9,2.5) node[anchor = west] {(c)};
    \draw[very thick, dotted] (0.4,-0.9) rectangle (9.1,0.5) node[anchor = west] {(d)};
  \end{tikzpicture}
  \caption{Circuit to implement IQP with additional intermediate qubit and randomness. This is the method used in Section \ref{sec:security proof}. The dotted box (c) indicates the preparation to be done by the Client while (d) indicates the corrections to be done. These corrections are Incorporated into the measurements.}
\end{figure}
\end{landscape}

\end{document}